\DeclareMathOperator{\rank}{rank}
\DeclareMathOperator{\Area}{Area}
\DeclareMathOperator{\Cond}{Cond}
\DeclareMathOperator{\dom}{dom}
\DeclareMathOperator{\wdom}{wdom}
\newtheorem{definition}{Definition}
\newtheorem{proposition}{Proposition}
\newtheorem{corollary}{Corollary}
\newtheorem{theorem}{Theorem}
\newtheorem{lemma}{Lemma}
\newtheorem{statement}{Statement}
\newtheorem{remark}{Remark}[section]
\newtheorem{example}{Example}
\begin{document}

\begin{frontmatter}



\title{Quantal Response Equilibrium and Rationalizability: Inside the Black Box
\footnote{We thank Pierpaolo Battigalli and Thomas Palfrey for their valuable comments and insightful suggestions, which significantly improved the paper. We thank the two anonymous reviewers, who played a crucial role in refining our work. We are grateful to Gerrit Bauch, Niels Boissennet, Yves Breitmoser, Samuele Dotta, Manuel F\"orster, Pierfrancesco Guarino, Dominik Karos, Frank Riedel, and Jurek Preker for their helpful comments and their encouragement. We thank participants in CEPET 2023, GRASS XVII, and the seminar at Bielefeld University. Shuige Liu acknowledges the financial support of Grant-in-Aids for Young Scientists (B) of Japan Society for the Promotion of Science (No.21K132637). Fabio Maccheroni acknowledges the financial support of the Italian Ministero dell’Università e della Ricerca (grant 2017CY2NCA). \\Email: \textsf{shuige.liu@unibocconi.it} (S. Liu), \textsf{fabio.maccheroni@unibocconi.it} (F. Maccheroni)}}


\author[inst1]{Shuige Liu}

\affiliation[inst1]{organization={Bocconi University},
            addressline={Via Roentgen, 1}, 
            city={Milan},
            postcode={20136}, 
            state={MI},
            country={Italy}}
            
\author[inst2]{Fabio Maccheroni}
\affiliation[inst2]{organization={Bocconi University and IGIER},
            addressline={Via Roentgen, 1}, 
            city={Milan},
            postcode={20136}, 
            state={MI},
            country={Italy}}
            
\begin{abstract}
This paper aims to connect epistemic and behavioral game theory by examining  the epistemic foundations of quantal response equilibrium (QRE) in static games. We focus on how much information agents possess about the probability distributions of idiosyncratic payoff shocks, in addition to the standard assumptions of rationality and common belief in rationality. When these distributions are transparent, we obtain a solution concept called $\Delta^p$-rationalizability, which includes action distributions derived from QRE; we also give a condition under which this relationship holds true in reverse. When agents only have common belief in the monotonicity of these distributions (for example, extreme value distributions), we obtain another solution concept called $\Delta^M$-rationalizability, which includes action distributions derived from rank-dependent choice equilibrium, a parameter-free variant of QRE. Our solution concepts also provide insights for interpreting experimental and empirical data.
\end{abstract}



\begin{keyword}
quantal response equilibrium \sep rank-dependent choice equilibrium \sep $\Delta$-rationalizability \sep epistemic game theory
\end{keyword}

\end{frontmatter}

\section{Introduction}
\label{sec:intro}
\noindent  Since the publication of the seminal work by McKelvey and Palfrey \cite{mp95}, research on quantal response equilibrium (QRE) has become  a thriving and rapidly advancing field (see Goeree, Holt, and Palfrey \cite{ghp16} for a survey). Instead of the exact best response assumed in Nash equilibrium, which causes the zero-likelihood problem in applications, by adopting a ``smooth'' (``disturbance-against-disturbance'') response, QRE has been recognized as a remarkable tool for developing empirically testable theories and fitting data. It is no surprise that the concept of QRE has been widely adopted and has prompted extensive, fruitful  experimental and theoretical research, improving our understanding of strategic behavior.

Yet one research avenue has been little explored: why people's behavior adheres to QRE.  The literature has extensively focused on interpreting and applying QRE from an observer's perspective.\footnote{There are exceptions, for example, research on cognitive hierarchy (for example, Goeree and Holt \cite{gh04}, Lin and Palfrey \cite{lp22}. See Chapter 4 in Goeree et al. \cite{ghp16} for a detailed discussion).} However, if we embrace methodological individualism as a foundational assumption in the social sciences, we must view human behavior as intentional and driven by motivation, and we should try to understand what happens inside the black box. This perspective is relevant because comprehending players' motivation and reasoning structure helps elucidate experimental data. It is especially pertinent when dealing with supposedly anomalous  data, since analyzing the functioning in the black box might help us to determine whether such data fall within the boundaries of the theoretical framework. Moreover, it aids in reassessing and refining experimental design. This approach could also provide valuable experimental insights to theorists who investigate the black box and could prompt them to explore the testability of their theories through experiments.

This paper  explores this avenue by using the tools developed in epistemic game theory (EGT).\footnote{This paper focuses on QRE in static games. The epistemic investigation of QRE in sequential games (McKelvey and Palfrey \cite{mp98}) will be conducted in another paper.} EGT analyzes strategic reasoning by studying various epistemic conditions and their behavioral consequences.\footnote{See, for example, Battigalli and Bonanno \cite{bb99}, Perea \cite{pa2012},  and Dekel and Siniscalchi \cite{ds15} for surveys. The textbook on game theory by Battigalli, Catonini, and De Vito \cite{betal20} provides comprehensive and insightful treatments permeated by EGT motivations, although it is not about formal EGT.} EGT started by questioning the  informal justifications of classical solution concepts (for example, ``each player has correct beliefs about her opponents' behavior'' for Nash equilibrium; see Aumann and Brandenburger \cite{ab95}). An important part of the EGT literature is dedicated to exploring the epistemic justification of given solution concepts, that is, the conditions regarding players' knowledge and beliefs that lead to the behavioral implications characterized by a given solution concept. This exploration relies solely on the application of decision-theoretic principles of rationality and strategic reasoning.

Since QRE is seen as capturing bounded rationality, one might naturally question its compatibility with the EGT approach, which has rationality at its core.\footnote{Though the term  \textit{rationality} has various meanings in the literature, in EGT it means maximization of an agent's expected payoff based on her subjective beliefs. See, for example, Aumann and Brandenburger \cite{ab95} and the aforementioned surveys and monographs on EGT.} Fortunately, as already elucidated in McKelvey and Palfrey \cite{mp95}, we can interpret the profile of  ``disturbance-against-disturbance'' responses designated by QRE as a mixed-action equilibrium of a static game with incomplete information.\footnote{Psychology and decision sciences have long focused on separating the phenomenon of probability-mechanism-governed behavior from  its cause and interpretation (see, for example, Luce and Suppes \cite{ls65}, chap.~5; Goeree et al. \cite{ghp16}, p.~11). For example, one may take utilities as fixed and ascribe the probability to bounded rationality influenced by unspecified psychological or cognitive factors (for example, ``trembling hands'' in Selten \cite{se75}), which is called the \emph{constant utility model}. Others explain the probability of actions by means of a randomly determined utility function, which is called the \emph{random utility model}. This model  is adopted in McKelvey and Palfrey \cite{mp95}, yielding the concept which is now called  \emph{structural QRE} (to differentiate it from the disturbance-against-disturbance one---that is, \emph{regular} QRE---in McKelvey and Palfrey \cite{mp96}). See Goeree et al. \cite{ghp16}, chap.~2.} Consider a static game being played among populations. Population $i$ has its own representative payoff function and a random variable following the distribution $p_{i}$; this variable indicates the idiosyncratic payoff shocks among individuals within the population. A QRE is composed of the action distribution in each population by pushing forward via each $p_{i}$ the best response under realized idiosyncrasies against the other populations' action distributions---that is,  the best response for one's payoff type to their belief in a game with incomplete information.\footnote{In Bayesian equilibria à la Harsanyi, what matters is the \emph{Harsanyi type}, which includes both the payoff type and a description of exogenous (hierarchical) beliefs. In this case, what matters is the payoff type because beliefs are determined by the population's action distributions.} Under this interpretation, QRE is compatible with rationality and can be analyzed with the methodologies of EGT.

Analysis of games with incomplete information from the perspective of EGT started with Battigalli and Siniscalchi \cite{bs03}.\footnote{For an introduction, see chapter~8 of Battigalli et al. \cite{betal20}. Formally, Battigalli and Siniscalchi \cite{bs03} is not an EGT paper, although it explicitly refers to epistemic game theoretical foundations presented in other works (Battigalli and Siniscalchi \cite{bs99}, \cite{bs02}, \cite{bs07}). Furthermore, it is now increasingly recognized that Battigalli \cite{b99} is the true beginning of the research on rationalizability in games with incomplete information.} In contrast to Harsanyi's \cite{ha67} canonical Bayesian-equilibrium approach, in which each agent's entire hierarchy of exogenous beliefs is (implicitly) specified, Battigalli and Siniscalchi \cite{bs03} extend Pearce's \cite{pe84} rationalizability concept and rely on explicit and general epistemic conditions of agents' knowledge and beliefs (for example, rationality, common belief in rationality), and they study outcomes (that is, pairs of payoff type and action) consistent with them. Further, their solution concept, called $\Delta$-rationalizability, accommodates also restrictions on beliefs imposed by the environment (denoted by $\Delta$).\footnote{More recently it has been relabeled as ``directed rationalizability'' to indicate the mapping from belief restrictions to the corresponding set of outcomes. See, for example, chapter~8.3 in Battigalli et al. \cite{betal20}.} By extending/generalizing Brandenburger and Dekel \cite{bd87}, they show that the set of Bayesian-equilibrium outcomes across all Harsanyi-type structures consistent with the $\Delta$ restrictions coincides with the set of $\Delta$-rationalizable outcomes. Therefore, it is convenient to adopt $\Delta$-rationalizability as the base tool and focus on the restriction on the beliefs for QRE.

Inspired by the role of the distribution profile $p = (p_{i})_{i \in I}$ of idiosyncratic payoff shocks in McKelvey and Palfrey's \cite{mp95} definition of QRE, the restriction we impose on beliefs is that $p$ is true and it is commonly believed to be true by all agents, or, in terms of EGT, $p$ is \emph{tranparent}.\footnote{The term \textit{transparency} was coined by Adam Brandenburger. We thank Pierpaolo Battigalli for this information.} The solution concept derived from rationality, common belief in rationality, and transparency of $p$ is called $\Delta^{p}$-\emph{rationalizability}, where the superscript $p$ emphasizes its cruciality. Proposition~\ref{Thmai1} shows that every QRE action distribution is $\Delta^{p}$-rationalizable, which is conceptually a special case of well-known results. Nonetheless, as is common in the literature on the epistemic foundations of equilibria, a $\Delta^{p}$-rationalizable action distribution does not necessarily appear in any QRE. Theorem~\ref{Thmai2} provides a sufficient condition for the structure of a game to guarantee that each $\Delta^{p}$-rationalizable action distribution appears in some QRE. The condition only relies on the representative payoff functions but not on $p$. When the condition is satisfied and the game has a unique QRE, the QRE can be reached by individual strategic reasoning instead of relying on a specific type space  \`a la Harsanyi \cite{ha67}. In this sense, Theorem~\ref{Thmai2} also shows when QRE is a robust prediction in the sense of Weinstein and Yildiz \cite{wy07}. \footnote{See also Bergemann and Morris \cite{bg05}. This sense of robust prediction of Bayesian Nash equilibria was already highlighted in Battigalli \cite{b99} and Battigalli and Siniscalchi \cite{bs03}.}

Several concerns may arise regarding this result. First, Theorem~\ref{Thmai2} does not put  any substantive restrictions on $p$. While this makes the result broadly applicable across various contexts, it overlooks the fact that in practice, additional assumptions are usually made about the payoff shocks. Typically, an application either assumes some functional form of regular QRE (for example, logistic quantal response functions derived from extreme value distributions) or, more often, assumes $p$ to be independent and identically distributed (i.i.d.) in the structural approach. This leads us to inquire whether the epistemic game theoretic analysis can effectively capture properties associated with such assumptions---for instance, the concept of (ordinal) monotonicity, wherein choice probabilities are ranked based on their expected payoffs. Second, the iterative procedure for obtaining $\Delta^{p}$-rationalizable outcomes (see Definition~\ref{def2}) is generically asymptotic; even in cases in which $\Delta^{p}$-rationalizability coincides with QRE, agents may have to go through a hierarchy with infinite depth. Experimental research has shown that this process can be exceptionally challenging (see, for example, Nagel \cite{nr95}, Ho, Cramer, and Weigelt \cite{hcw98}, Alaoui and Penta \cite{ap16}, Halevy, Hoelzemann, and Kneeland \cite{ hhk23}). Last, our epistemic condition requires common belief in $p$. However, except in rare instances, establishing widespread acceptance of a specific probability distribution, let alone fostering a common belief in it, is inherently difficult. \footnote{Psychological research has long recognized the cognitive challenges associated with perceiving, processing, and applying statistical information. See, for example, Cohen \cite{cj62}, Tversky and Kahneman \cite{tk71}, and Gallistel, Krishan, Liu, and Miller \cite{gklm14}.}

To address those concerns, we  relax the condition by exploring the epistemic foundation of rank-dependent choice equilibrium (RDCE) in Goeree, Holt, Louis, Palfrey, and Rogers \cite{getal19}. RDCE modifies QRE by replacing the parametric property of the latter with the coarser and simpler  monotonicity condition. In addition to $\Delta$-rationalizability, we draw inspiration from proper rationalizability (Asheim \cite{ag01}, Schumacher \cite{sf99}, Perea \cite{pa2011}) and show that the epistemic condition for RDCE is the transparency of monotonicity, or, equivalently, monotonic distributions (for example, extreme value distributions) of idiosyncratic payoff shocks in populations (alongside the standard assumptions of rationality and mutual belief in rationality). We call the solution concept $\Delta^{M}$-rationalizability, where the superscript $M$ emphasizes monotonicity. We show that each RDCE is $\Delta^{M}$-rationalizable. 

Finally, we highlight that our solution concepts target action distributions rather than actions or type-action pairs. This is another innovation of this paper. The EGT literature used to focus on rationalizable type-action pairs surviving the iterative rationalization procedure.  However, in the realm of experimental and empirical studies, prominent solution concepts like QRE and RDCE typically yield interior points, meaning that each action is rationalizable. Therefore, here, the classic concept of rationalizable (single) actions itself might not be able to provide valuable information relevant to experimental game theorists' interests. By pushing forward rationalizable action distributions from idiosyncratic payoff shocks, our results offer a convenient and relevant connection between EGT and behavioral game theory. Further, this approach might also contribute to the rapidly expanding field of empirically testing epistemic properties (See, for example, Alaoui and Penta \cite{ap16}, Kneeland \cite{tk15}, \cite{tk16}, Halevy et al. \cite{ hhk23}, Friedenberg, Kets, and Kneeland \cite{fkk21}).

The rest of the paper is organized as follows. Section~\ref{sec:qredef} gives the preliminary definitions and the epistemic conditions of QRE; also, it provides some examples and an algorithm to compute $\Delta^{p}$-rationalizable distributions. Section~\ref{sec:uncer} gives results on the condition for the coincidence of $\Delta^{p}$-rationalizability and QRE action distributions. Section~\ref{rdceep} studies the epistemic conditions of RDCE. Section~\ref{concluding} contains concluding remarks about how to use our concept to interpret data. All proofs are in Appendix A; Appendix B discusses the  generalization of Theorem~\ref{Thmai2}.

\section{Quantal response equilibrium and epistemic conditions}\label{sec:qredef}

\subsection{Quantal response equilibrium}\label{qredef}

\noindent Let $G = \langle I, (A_{j},u_{j})_{j \in I}\rangle$ be a static game, where $I$ is the finite set of players; for each $i \in I, A_{i}$ is the finite set of actions with $|A_{i}| \geq 2$, and $u_{i}:A$ $(:=\times_{j \in I}A_{j}) \rightarrow \mathbb{R}$ is the payoff function. Each $i \in I$ is interpreted as a population and $u_i$ its representative payoff function. In addition, for each $i \in I$, there is a random variable indicating the idiosyncratic payoff shock regarding each action; formally, this is modeled by associating $G$ with a pair $\langle \Theta, p\rangle$, where $\Theta := \times_{j \in I}\Theta_{j}$ and $p = (p_{j})_{j \in I}$ such that for each $i \in I$, $\Theta_{i} = \mathbb{R}^{A_{i}}$ and $p_{i}$ is a probability measure on $\Theta_{i}$.\footnote{It differs from Harsanyi's \cite{ha73} games with randomly disturbed payoffs, in which a random error is associated with each \emph{profile} of actions. Having random variables associated with individual actions, as prescribed in the setting of McKelvey and Palfrey \cite{mp95},  makes it convenient for epistemic analysis, allowing us to analyze ex ante strategic reasoning in the decision-theoretic sense.} Note that each $\theta_{i} \in \Theta_{i}$ specifies a mapping from $A_{i}$ to $\mathbb{R}$, and  for each $a_{i} \in A_{i}$, we use $\theta_{{i},a_{i}}$ to denote the value assigned to $a_{i}$ by $\theta_{i}$. Here, we only assume that each $p_{i}$ has a density function $f_{i}$, which has a marginal $f_{i,a_{i}}$ for each $a_{i} \in A_{i}$.\footnote{\label{fn1}We follow McKelvey and Palfrey \cite{mp95} in adopting these conditions on $p_{i}$ to guarantee the applicability of Brouwer fixed-point theory and the existence of QRE. The conditions imply that $p_{i}$ is absolutely continuous, which plays an important role in Theorem~\ref{Thmai2}.} Conditions assumed frequently in the literature, such as monotonicity, will be discussed in Section~\ref{rdceep}.

Given a profile of action distributions $\pi = (\pi_{j})_{j \in I} \in \times_{j\in I}\Delta(A_{j})$, for each $i \in I$ and $a_{i} \in A_{i}$, we let $\pi_{-i} = (\pi_{j})_{j \in I\setminus \{i\}}$, and we define $E_{i,a_{i}}(\pi_{-i})$ to be the set of realizations in $\Theta_{i}$ for which $a_{i}$ is a best response to $\pi_{-i}$. Formally,
\begin{equation}\label{areae}
 E_{i,a_{i}}(\pi_{-i}): = \{\theta_{i} \in \Theta_{i}:u_{i}(a_{i},\pi_{-i})+\theta_{i,a_{i}} \geq u_{i}(a_{i}^{\prime},\pi_{-i})+\theta_{i,a_{i}^{\prime}} \text{ for each } a_{i}^{\prime} \in A_{i}\},
 \end{equation}  
where $u_{i}(\cdot,\pi_{-i})$ is the expected value of $u_{i}(\cdot, a_{-i})$ with respect to $\pi_{-i}$.

Within this framework, McKelvey and Palfrey \cite{mp95} define QRE as a ``self-fulfilling'' profile of action distributions: given others' action distributions, the prescribed probability at which an action $a_{i}$ of $i$ is taken coincides with the measure with respect to $p_{i}$ of the set of the realized idiosyncrasies for which $a_{i}$ is a best response. This is formalized in Definition~\ref{def1}.

\begin{definition}\label{def1}
Given $\langle G, \Theta,p\rangle$, an action distribution profile $\pi^{*} = (\pi_{j}^{*})_{j \in I} \in \times_{j\in I}\Delta(A_{j})$ is a \textbf{quantal response equilibrium (QRE)} iff for each $i \in I$ and each $a_{i} \in A_{i}$, $\pi_{i}^{*}(a_{i})= p_{i}(E_{i,a_{i}}(\pi^{*}_{-i}))$ .
\end{definition}

One can see that QRE is the profile pushed forward from the payoff-shock distributions given a Bayesian equilibrium profile of decision functions. This push-forward method is also important for the definition of our solution concepts (Definition~\ref{norac}) and for the proof of Proposition~\ref{Thmai1}.

As mentioned in Section~\ref{sec:intro}, compared to the definition by regular quantal response function given in McKelvey and Palfrey \cite{mp96}, this definition boils down the action distributions to each agent's best response for her private information about the realized payoff idiosyncrasy. This facilitates our epistemic analysis in the following sections.

\subsection{Games with incomplete information and \texorpdfstring{$\Delta^{p}$}{delta^p}-rationalizability}\label{deltapro}
 
\noindent  The strategic environment described above can be rephrased as a \emph{game with payoff uncertainty} $\hat{G} = \langle I, (\Theta_{j},A_{j},U_{j})_{j \in I} \rangle$,  where for each $i \in I$, $\theta \in \Theta$, and $a = (a_{i},a_{-i}) \in A$, $U_{i}(\theta,a)  = u_{i}(a_{i},a_{-i}) + \theta_{i,a_{i}}$. Each $\theta_{i}$ is called a \emph{payoff type} (abbreviated as \emph{type}) of $i$. We are analyzing games with payoff uncertainty and \emph{private values}---that is, only beliefs about others' actions directly matter for expected-payoff maximization, while beliefs about others' types are relevant for strategic reasoning. This property is part of the definition of QRE and  plays a crucial role in our analysis.

Note that a game with payoff uncertainty is not a Bayesian game since it does not (implicitly or explicitly) specify the belief hierarchy for each player; rather, it could be regarded as a  general prototype that avoids imposing any implicit restriction on hierarchical beliefs.\footnote{See, for example, Battigalli and Siniscalchi \cite{bs03} and  Battigalli et al. \cite{betal20}, chap.~8.} Therefore, it is suitable for the analysis of strategic reasoning that is consistent with various epistemic conditions.

For instance, we show how to describe and obtain the behavioral implications of  two canonical epistemic assumptions, namely \emph{rationality} and \emph{common belief of rationality}.\footnote{Defining the event satisfying some epistemic conditions (for example, rationality and common belief in rationality) is different from  characterizing their behavioral implications (which is what the solution concept does). Here, our analysis is about the latter. For a formal representation of the former with  rigorous and explicit language and detailed discussions, see, for example, Battigalli and Bonanno \cite{bb99} and Dekel and Siniscalchi \cite{ds15}.} Recall that rationality means maximization of an agent's payoff based on her beliefs. At the beginning of the game, an agent in population $i$ has a (subjective) \emph{belief} $\mu^{i}$, which is a distribution over others' types and actions, that is, $\mu^{i} \in \Delta(\Theta_{-i} \times A_{-i})$. Given $\theta_{i} \in \Theta_{i}$ as the realized type, an action $a_{i} \in A_{i}$ is a \emph{best response} for $\theta_{i} \in \Theta_{i}$ to $\mu^{i}$ iff the following condition is satisfied:
\begin{equation}\label{rationality}
 \tag{\textbf{R}}
u_{i}(a_{i},\text{ marg}_{A_{-i}}\mu^{i}) + \theta_{i,a_{i}} \geq u_{i}(a_{i}^{\prime},\text{ marg}_{A_{-i}}\mu^{i}) + \theta_{i,a_{i}^{\prime}} \text{ for each }a_{i}^{\prime} \in A_{i} 
\end{equation}
Here, marg$_{A_{-i}}\mu^{i}$ is the marginal distribution of $\mu_{i}$ on $A_{-i}$, that is,  her belief about the distributions of others' actions. As we mentioned above, only beliefs about others' actions directly matter for expected payoff maximization. A pair $(\theta_{i}, a_{i})$ is \emph{consistent with rationality} iff it satisfies (\ref{rationality}) regarding some $\mu_{i}$.

Common belief of some event means that everyone believes it, and everyone believes that everyone believes it, and so on. Hence, describing it and analyzing its behavioral consequences necessitates the use of an iterative procedure. For example, ``everyone believes rationality'' means that for every agent, her belief $\mu^{i}$ only deems possible type-action pairs consistent with rationality. A pair $(\theta_{i}, a_{i})$ is \emph{consistent with rationality \emph{and} belief in rationality} iff it satisfies (\ref{rationality}) regarding some  such $\mu_{i}$. Iteratively, we can continue this procedure and define belief in belief in rationality, belief in belief in belief in rationality, and so on. A pair $(\theta_{i}, a_{i})$ is \emph{consistent with rationality and common belief in rationality} iff $a_{i}$ is a best response for $\theta_{i}$ to some belief surviving each level of the infinite hierarchy.

Rationality and common belief in rationality are the two canonical assumptions in the literature. They do not place any \emph{exogenous} restrictions on beliefs; all restrictions on beliefs are imposed by strategic thinking, that is, they are \emph{endogenous}. Battigalli and Siniscalchi \cite{bs03} examined  exogenous constraints on beliefs and investigated their implications. They called their solution concept $\Delta$-\emph{rationalizability}, which is deduced by assuming rationality, common belief in rationality, and the transparency of specified belief restrictions $\Delta$. They proved that, in static games, $\Delta$-rationalizability characterizes the set of outcomes consistent with Bayesian equilibrium across all Bayesian games that satisfy the belief restrictions $\Delta$. Since it is known that each QRE yields a Bayesian equilibrium, the problem here is to find appropriate restrictions $\Delta$ on beliefs for QRE. Through a meticulous examination of Definition~\ref{def1}, one could discern the significance of $p$, the probabilities of the idiosyncratic payoff shocks. Indeed, through $p$  the choice for each type is pushed forward into action distributions, which provides the base for everyone's best response under her type. In other words, $p$ is the engine of the fixed-point property of QRE. Therefore, it is reasonable to conjecture that $p$ provides a restriction on each individual's first-order beliefs, that is, everyone's marginal belief on each $\Theta_{j}$ should coincide with $p_{j}$.

Formally, we say that a belief $\mu^{i} \in \Delta(\Theta_{-i} \times A_{-i})$ of $i$ is \emph{consistent with $p$} iff 
\begin{equation}\label{equ1}
\tag{\textbf{CP}}
\mu^{i} = \prod_{j \neq i} \mu^{i}_{j} \text{ with } \mu_{j}^{i} \in \Delta(\Theta_{j} \times A_{j}) \text{ and } \text{marg}_{\Theta_{j}}\mu^{i}_{j} = p_{j} \text{ for each } j \neq i
\end{equation}
We use $\Delta^{p}_{i}$ to denote the set of all beliefs of $i$ satisfying (\ref{equ1}), and we let $\Delta^{p} = (\Delta^{p}_{j})_{j \in I}$. Here, the superscript $p$ emphasizes its crucial role in the restriction. 

Note that (\ref{equ1}) has two parts. First, $\mu^i$ is a product measure, which means that the belief of (an agent in) $i$ about one population is independent of her belief about another; therefore, it contains an independence assumption. Second, for each $j \neq i$, the marginal belief of each population $j$'s types should coincide with $p_j$. We call the property ``consistency with $p$'' instead of ``consistency with $p$ \emph{and independence}'' because, besides wanting to simplify the symbols, the independence is derived from a property of $p$, that is, $p$ is a profile $(p_j)_{j \in I} \in \times_{j \in I}\Delta(S_j)$ instead of (probably) a correlated distribution in $\Delta(\times_{j \in I}S_j)$. The condition (\ref{cm}) (consistency with monotonicity) in Section~\ref{sec:cm} should also be understood in the same manner.
\medskip

Restriction $\Delta^{p}_{i}$ is imposed on the first-order belief of agents from population $i$. Further, we assume that the restriction is \emph{transparent}, that is, the restriction holds and it is commonly believed to hold. Applying Battigalli and Siniscalchi's \cite{bs03} argument, the behavioral consequences of  rationality (R), common belief in rationality (CBR), and  transparency of $\Delta^{p}$ (TCP) are characterized by the iterative procedure defined as follows.

\begin{definition}\label{def2}
Consider the following procedure, called $\Delta^{p}$\textbf{-rationalization} \textbf{procedure}:

\textbf{Step 0}. For each $i \in I$, $\Sigma_{i,\Delta^{p}}^{0} = \Theta_{i} \times A_{i}$.

\textbf{Step $n+1$}. For each $i \in I$ and each $(\theta_{i},a_{i}) \in \Theta_{i} \times A_{i}$ with $(\theta_{i},a_{i}) \in \Sigma_{i,\Delta^{p}}^{n}$, $(\theta_{i},a_{i}) \in \Sigma_{i,\Delta^{p}}^{n+1}$ iff  there is some $\mu^{i} \in \Delta^{p}_{i}$ such that
\begin{enumerate}
\item $a_{i}$ is a best response to $\mu^{i}$ under $\theta_{i}$, and
\item $\mu^{i}(\Sigma_{-i,\Delta^{p}}^{n})=1$ where $\Sigma_{-i,\Delta^{p}}^{n} := \times_{j \neq i}\Sigma_{j,\Delta^{p}}^{n}$.
\end{enumerate}
Let $\Sigma_{i,\Delta^{p}}^{\infty} =\cap_{n\geq 0}\Sigma_{i,\Delta^{p}}^{n} $ for $i \in I$. The elements in $\Sigma_{i,\Delta^{p}}^{\infty}$ are said to be \textbf{$\Delta^{p}$-rationalizable}.
\end{definition}

A pair $(\theta_{i},a_{i})$ is called \emph{$n$-$\Delta^{p}$-rationalizable} iff $(\theta_{i},a_{i}) \in \Sigma_{i,\Delta^{p}}^{n}$. To survive at step $n$, $a_i$ has to be a best response under $\theta_i$ to a belief drawing upon the opponents' type-action pairs surviving up to the previous step and consistent with $p$, which shows the iterative structure of beliefs committing to rationality and restriction $\Delta^{p}$. A pair $(\theta_{i},a_{i})$ is $\Delta^{p}$-rationalizable iff it survives every step in the procedure---in other words, iff it is consistent with rationality,  common belief in rationality, and  transparency of $\Delta^{p}$.

Note that $\Delta^{p}$-rationalizability appeals to \emph{each player's} type-action pairs, while  an equilibrium is a \emph{profile} which specifies actions (for each Harsanyi type) \emph{for all players}. This explicitly embodies the characteristic methodology of EGT: appealing to decision-theoretic principles of rationality, that is, analyzing strategic reasoning based on \emph{individual} references within her own mind. In this sense, using the methodology of EGT to analyze the epistemic conditions for an equilibrium is regarded as providing a decision-theoretic foundation for the latter.\medskip

As mentioned in Section~\ref{sec:intro}, instead of $\Delta^{p}$-rationalizable type-action pairs or actions, our interest is in $\Delta^{p}$-rationalizable \emph{action distributions}, that is,  the action distributions obtained from $p$ and $\Delta^{p}$-rationalizable outcomes. Here is the formal definition.
\begin{definition}\label{norac}
We call $\pi_{i} \in \Delta(A_{i})$ an \textbf{$n$-$\Delta^{p}$-rationalizable action distribution} for $i$ iff there is $\mu_{i} \in \Delta(\Theta_{i} \times A_{i})$ with $\mu_{i}(\Sigma_{i,\Delta^{p}}^{n})=1$ and marg$_{\Theta_{i}}\mu_{i} = p_{i}$ such that $\pi_{i} =$ marg$_{A_{i}}\mu_{i}$. We call $\pi_{i}$ \textbf{$\Delta^{p}$-rationalizable} iff it is $n$-$\Delta^{p}$-rationalizable for each $n \in \mathbb{N}$.
\end{definition}

In words, $\pi_{i} \in \Delta(A_{i})$ is a ($n$-) $\Delta^{p}$-rationalizable action distribution if it can be obtained from a distribution $\mu_{i}$ on $\Delta(\Theta_{i} \times A_{i})$ whose support is composed of  $n$-$\Delta^{p}$-rationalizable type-action pairs. By requiring $\mu_{i}$ to be consistent with $p_{i}$, $\pi_{i}$ can essentially be pushed forward from $p_{i}$: one can consider that for each $i \in I$, there is a push-forward function $\sigma_{i}:\Theta_{i} \rightarrow A_{i}$ such that for each $\theta_{i} \in \Theta_{i}$, $(\theta_{i}, \sigma_{i}(\theta_{i})) \in \Sigma_{i,\Delta^{p}}^{\infty}$ ($\in \Sigma_{i,\Delta^{p}}^{n}$). Define $\pi_{i} \in \Delta(A_{i})$ such that $\pi_{i}(a_{i}) = p_{i}(\sigma_{i}^{-1}(a_{i}))$. It is clear that $\pi_{i}$ is ($n$-) $\Delta^{p}$-rationalizable.\footnote{We do not specify $\mu_{i}$ since any $\mu_{i}$ compatible with $\sigma_{i}$ works here. For example, $\mu_{i}$ can be defined as follows: for each measurable $E \in \Delta(\Theta_{i} \times A_{i}), \mu_{i}(E) := p_{i}($Proj$_{\Theta_{i}}E)$. Note that in the definition, the existence of some $\mu_{i} \in \Delta(\Theta_{i} \times A_{i})$ with $\mu_{i}(\Sigma_{i,\Delta^{p}}^{n})=1$ and marg$_{\Theta_{i}}\mu_{i} = p_{i}$ is guaranteed since $\Delta^{p}_{i}$ only imposes restrictions on beliefs about others' payoff types, which implies Proj$_{\Theta_{i}}\sum_{i,\Delta^{p}}^{n} = \Theta_{i}$ for each $n \in \mathbb{N}$. See Remark~3.2 in Battigalli and Siniscalchi \cite{bs03}.} By this notion, we can compare QRE (and other mixed-action equilibria) with $\Delta^{p}$-rationalizability.

Given a static game $G = \langle I, (A_{j},u_{j})_{j \in I}\rangle$ with $\langle\Theta,p\rangle$, we use $Q(G,\Theta, p)$ to denote the set of QREs. As noted in footnote \ref{fn1}, Brouwer's fixed-point theorem guarantees that $Q(G,\Theta, p) \neq \emptyset$. For each $i \in I$, $\pi_i \in \Delta(A_i)$ is said to be a \emph{QRE action distribution} (sometimes abbreviated as \emph{QRE distribution}) iff for some $\pi_{-i} \in \times_{j \neq i}\Delta(A_j)$, $(\pi_i, \pi_{-i}) \in Q(G,\Theta, p)$. First, we have the following result showing that each QRE action distribution is consistent with rationality, common belief in rationality, and transparency of $\Delta^{p}$.\footnote{As mentioned in Section~\ref{sec:intro}, Proposition~\ref{Thmai1} is conceptually straightforward. Yet in this special case one needs to define a proper push-forward function, which, to the best of our  knowledge, does not exist in the literature. Hence we still give the proof in \ref{sec:proof}.}

\begin{proposition}\label{Thmai1}
For each $\pi = (\pi_{j})_{j \in I} \in Q(G,\Theta, p)$ and each $i \in I$, $\pi_{i}$ is $\Delta^{p}$-rationalizable.
\end{proposition}

Examples in the next subsections will illustrate how the $\Delta^{p}$-rationalization procedure works. An algorithm to compute $n$-$\Delta^{p}$-rationalizable distributions for two-person games will be given there.

\subsection{Examples and an algorithm for two-person games}\label{sec:examp}

Here, we focus on two-person games and we use $i,j $ to denote different players, that is,  $i,j \in \{1,2\}$ and $i \neq j$.

\begin{example}\label{ex11}
Consider the (symmetric) Matching Pennies game in Table~\ref{TAB1}.\footnote{The games in Examples~\ref{ex11} and  \ref{ex33} were experimentally studied in Ochs \cite{oc95}, from which we adopt the term \textit{(a)symmetric Matching Pennies game}.} For each $i = 1,2$, $p_{i}$ is the uniform distribution on $[-2,2] \times [-2,2]$. The unique QRE is $\pi^{*}=((\frac{1}{2}H,\frac{1}{2}T),(\frac{1}{2}H,\frac{1}{2}T))$, that is, for each player, the two actions are used randomly equally. We show that, for each $i = 1,2$, the only $\Delta^{p}$-rationalizable distribution is $\pi^{*}_{i}$.  

\begin{table}[ht!]

\centering
 \begin{tabular}{c c c }

 \hline
 $1\setminus 2$ & $H$ & $T$ \\ 
 \hline
$H$ & $1,0$ &  $0,1$ \\ 

 $T$ & $0,1$ & $1,0$ \\

 \hline
\end{tabular}
\caption{Example \ref{ex11}}
\label{TAB1}

\end{table}

\begin{figure}[ht]
\centering
  \includegraphics[width=1\columnwidth]{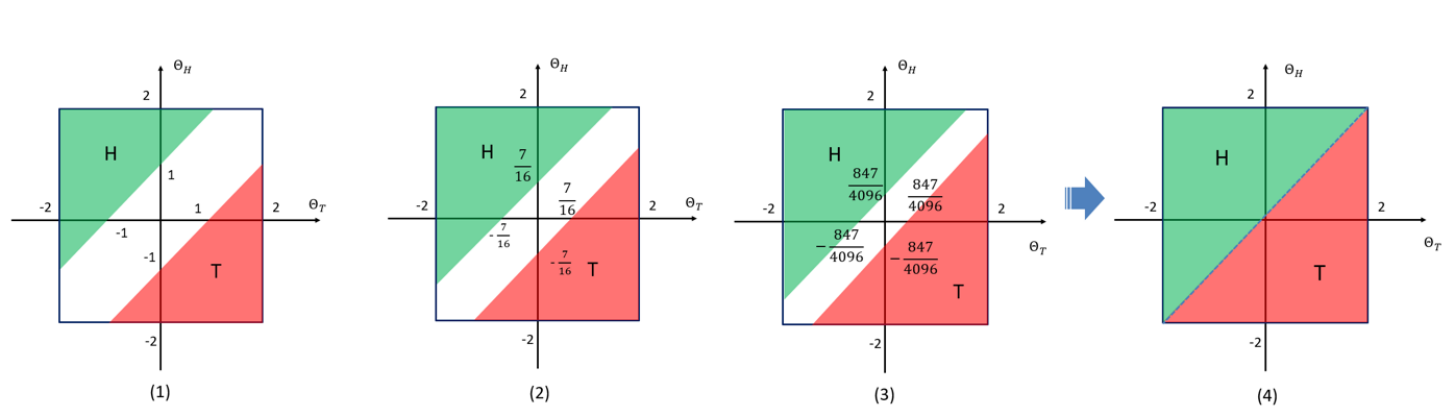}
  \caption{The $\Delta^{p}$-rationalization procedure converges to the QRE}
  \label{fig:MES}
\end{figure}

Let $i \in I$ and $\theta_{i} =(\theta_{iH},\theta_{iT}) \in \Theta_{i}$. First, $(\theta_{i},H) \in \Sigma_{i,\Delta^{p}}^{1}$ if and only if for some probability $q_{j} = (q_{j}(H), q_{j}(T)) \in \Delta(A_{j})$, $u_{i}(H,q_{j})+\theta_{iH} \geq u_{i}(T,q_{j})+\theta_{iT}$, or, by rearranging the inequality,
\begin{equation}\label{rty10}
\theta_{iH}-\theta_{iT} \geq q_{j}(H)[u_{i}(T,H) - u_{i}(H,H)] + q_{j}(T)[u_{i}(T,T) - u_{i}(H,T)]
\end{equation}
Note that if $\theta_{iH}-\theta_{iT}$ is bigger that the maximum value of the left-hand side of inequality (\ref{rty10}), obviously $H$ is a best response. Since $u_{i}(T,H) - u_{i}(H,H) = -1$ and $u_{i}(T,T) - u_{i}(H,T)=1$, to obtain the maximum, we only need to maximize the coefficient of $[u_{i}(T,T) - u_{i}(H,T)]$ and minimize that of $[u_{i}(T,H) - u_{i}(H,H)]$, that is, to let $q_{j}(H) = 0$ and $q_{j}(T) = 1$.\footnote{This $q_{j}$ can be generated from a belief consistent with $p$. Let $\Theta_{j}(T) =  \Theta_{j} \times \{T\}$ and define $\mu^{i}$ such that for each measurable $E \subseteq  \Theta_{j} \times A_{j}$, $\mu^{i}(E) = p_{j}($Proj$_{\Theta_{j}}(E \cap \Theta_{j}(T)))$. One can easily see that (i) $\mu^{i} \in \Delta^{p}_{i}$ and (ii) marg$_{A_{j}}\mu^{i} = 0H+1T$. Similar results can be proved for each step in the procedure.} Hence, when $\theta_{iH}-\theta_{iT} \geq 1$ (the green area in Figure~\ref{fig:MES} (1)), $(\theta_{i},H) \in \Sigma_{i,\Delta^{p}}^{1}$. Similarly, when $\theta_{iT}-\theta_{iH} \geq 1$ (the red area in Figure~\ref{fig:MES} (1)), $(\theta_{i},T) \in \Sigma_{i,\Delta^{p}}^{1}$. To every type between them (that is,  the white area) both $H$ and $T$ could be associated: each could be supported by some allowable beliefs.

Note that the probability measure of each area is $\frac{9}{32}$. This  means that for each $\mu^{i} \in \Delta^{p}_{i}$ with $\mu^{i}(\Sigma_{j,\Delta^{p}}^{1})=1$, both marg$_{A_{j}}\mu^{i}(H)$ and marg$_{A_{j}}\mu^{i}(T)$ are at least $\frac{9}{32}$. This imposes a new restriction on  allowable  beliefs. Therefore,  at the second step of the procedure, to obtain the maximum of the left-hand side of inequality (\ref{rty10}) , we need to let $q_{j}(H)$ at least $\frac{9}{32}$ and $q_{j}(T)$ at most $1- \frac{9}{32} = \frac{23}{32}$, which implies that now the maximum value is $\frac{7}{16}$; in other words, at the second step, when $\theta_{iH}-\theta_{iT} \geq  \frac{7}{16}$, $(\theta_{i}, H) \in \Sigma_{i,\Delta^{p}}^{2}$; similarly, when $\theta_{iT}-\theta_{iH} \geq  \frac{7}{16}$,  $(\theta_{i},T) \in \Sigma_{i,\Delta^{p}}^{2}$. Those are illustrated as the green and red areas in Figure~\ref{fig:MES} (2), respectively. Now we obtain restrictions on for the beliefs in the third step, which leads to Figure~\ref{fig:MES} (3).

We can continue this process: at step $n$, by replacing $q_{j}(H)$ with the lower bound $q_{j,H}^{n}$ according to the restriction imposed at the previous step, and by letting $q_{j}(T)$ be $1-q_{j,H}^{n}$, we obtain via inequality (\ref{rty10}) a new condition of restrictions on the lower bound of the probability of each action in beliefs. In general, let $q^{n}_{i,a_{i}}$ be the lower bound of the probability of $a_{i} \in \{H,T\}$ in beliefs at step $n$. Then one can see that $q^{0}_{i,a_{i}} = 0$ and for $n \geq 0$,
\begin{equation*}
q^{n+1}_{i,a_{i}}=\frac{(3+2q^{n}_{i,a_{i}})^{2}}{32},
\end{equation*}
which is bounded and increasing; hence $\lim_{n \rightarrow \infty}q^{n}_{i,a_{i}} =  \frac{1}{2}$, the QRE action distribution. $\blacktriangle$

\end{example}

The calculation in Example~\ref{ex11} shows that for each $a_{i}$, the set of $\theta_{i}$ for which $a_{i}$ is ``absolutely'' preferred to all other actions is important since its measure with respect to $p_{i}$ provides a lower bound for the probability of $a_{i}$ in beliefs at the next step. Then what conditions characterize the set? We decompose this question into smaller modulars: for $a_{i}^{\prime} \in A_{i}$ ($a_{i}^{\prime} \neq a_{i}$), when $u_{i}(a_{i},q_{j})+\theta_{i,a_{i}} \geq u_{i}(a_{i}^{\prime},q_{j})+\theta_{i,a_{i}^{\prime}}$ holds for each allowable $q_{j}$? Since, for each $q_{j} \in \Delta(A_{j})$, $u_{i}(a_{i}^{\prime},q_{j}) - u_{i}(a_{i},q_{j}) = \sum_{a_{j} \in A_{j}} q_{j}(a_{j})[u_{i}(a_{i}^{\prime},a_{j}) - u_{i}(a_{i},a_{j})]$, it is equivalent to asking when the following condition holds:
\begin{equation}\label{conde}
\tag{\textbf{MAX}}
\theta_{i,a_{i}}  - \theta_{i,a_{i}^{\prime}} \geq\sum_{a_{j} \in A_{j}} q_{j}(a_{j})[u_{i}(a_{i}^{\prime},a_{j}) - u_{i}(a_{i},a_{j})]
\end{equation}
To satisfy the condition (\ref{conde}), one only has to consider what the maximum of $\sum_{a_{j} \in A_{j}} q_{j}(a_{j})$ $(u_{i}(a_{i}^{\prime},a_{j}) - u_{i}(a_{i},a_{j}))$ is over all allowable $q_{j}$ at each step. As shown in Example~\ref{ex11}, this allowability is recursively defined in the procedure. As one can see from basic linear optimization theory, it is helpful to decompose $A_{j}$ into two groups:
\begin{subequations} \label{all}
    \begin{align}
   \Phi_{i}(a_{i},a_{i}^{\prime}) & :=\{a_{j} \in A_{j}: u_{i}(a_{i}^{\prime},a_{j}) - u_{i}(a_{i},a_{j}) \geq u_{i}(a_{i}^{\prime},a^{\prime}_{j}) - u_{i}(a_{i},a^{\prime}_{j}) \text{ for all } a^{\prime}_{j} \in A_{j}\} \label{Big ones}\\
    \phi_{i}(a_{i},a_{i}^{\prime}) & :=A_{j} \setminus \Phi_{i}(a_{i},a_{i}^{\prime}) \label{Small ones}
     \end{align}
\end{subequations}
In words, $\Phi_{i}(a_{i},a_{i}^{\prime})$ is the set of the opponent's actions that maximize the improvement of $i$'s payoff if she deviates from $a_{i}$ to $a_{i}^{\prime}$, and $\phi_{i}(a_{i},a_{i}^{\prime})$ is its complement. At each step $n$, by making $\sum_{a_{j} \in \Phi_{j}(a_{i},a_{i}^{\prime})}q_{j}(a_{j})$ as large as possible and the weighted payoff over $ \phi_{i}(a_{i},a_{i}^{\prime})$ as small as allowable according to the condition imposed in the previous step, we obtained the set of $\theta_{i}$ satisfying condition (\ref{conde}), which brings about the measure of the set of payoff types at step $n$ of the $\Delta^{p}$-rationalization procedure under which $a_{i}$ is chosen regardless of the belief about others' behavior. We denote this measure by $q_{i,a_{i}}^{n}$, which provides a lower bound for allowable $q_{i}(a_{i})$ at step $n+1$.

The above argument is summarized in Algorithm~\ref{alg:cap}. Here, we let $\overline{H}_{i}(a_{i},a_{i}^{\prime}) =  u_{i}(a_{i}^{\prime},a_{j}) - u_{i}(a_{i},a_{j})$ for an $a_{j} \in \Phi_{i}(a_{i},a_{i}^{\prime})$. By the definition of $\Phi_{i}(a_{i},a_{i}^{\prime})$, $\overline{H}_{i}(a_{i},a_{i}^{\prime})$ is well defined. The algorithm is written in a pseudocode that can be translated into any programming language. Given a two-person game and $p = (p_{j})_{j \in I}$, the algorithm gives  $q_{i,a_{i}}^{n}$ for each $i$ and $a_{i} \in A_{i}$ up to any step $n$.  

\begin{algorithm}[h!]
\caption{An algorithm for computing $(q_{i,a_{i}}^{n})_{i \in I, a_{i} \in A_{i}}$, $n \in \mathbb{N}$}\label{alg:cap}
\begin{algorithmic}
\State \textbf{start:} $q_{i,a_{i}}^{o} =0$ for each $i \in I$ and $a_{i} \in A_{i}$
\While{$t < n$}
\State Set $V_{i}^{t+1}(a_{i},a_{i}^{\prime}) = \sum_{a_{j} \in \phi_{i}(a_{i},a_{i}^{\prime})}u_{i}(a_{i},a_{j})q_{j,a_{j}}^{t} + \overline{H}_{i}(a_{i},a_{i}^{\prime})(1- \sum_{a_{j} \in \phi_{i}(a_{i},a_{i}^{\prime})}q_{j,a_{j}}^{t})$ 
\State Set $E_{i,a_{i}}^{t+1} = \{\theta_{i}\in \mathbb{R}^{A_{i}}: \theta_{i,a_{i}}-\theta_{i,a_{i}^{\prime}} \geq V_{i}^{t+1}(a_{i},a_{i}^{\prime}) \text{ for each }a_{i}^{\prime} \in A_{i} \setminus \{a_{i}\}\}$
\State Set $q_{i,a_{i}}^{t+1}=p_{i}(E_{i,a_{i}}^{t+1})$
\EndWhile

\end{algorithmic}
\end{algorithm}
In Algorithm~\ref{alg:cap}, $V_{i}^{t+1}(a_{i},a_{i}^{\prime})$ is the maximum of the gap between the payoff from $a_{i}^{\prime}$ to $a_{i}$ based on all allowable distributions at step $t$. As we argued before, it is obtained by letting the probability of actions in  $\phi_{i}(a_{i},a_{i}^{\prime})$ be as small as possible (that is, letting them be $q_{j,a_{j}}^{t}$, the measure of the set of types that absolutely prefer $a_{i}$ to $a_{i}^{\prime}$ allowed at step $t$) and letting the actions in $\Phi_{i}(a_{i},a_{i}^{\prime})$ be as large as possible (that is, letting  their sum be $(1- \sum_{a_{j} \in \phi_{i}(a_{i},a_{i}^{\prime})}q_{j,a_{j}}^{t}$)). Based on this, $E_{i,a_{i}}^{t+1}$ is the set of types under which $a_i$ is preferred to any other actions to any allowable beliefs at step $t+1$, and $q_{i,a_{i}}^{t+1}$ is its measure, which provides the recursive base of the computation in the next step.\footnote{Since $E_{i}^{t+1}(a_{i})$ is a polyhedron defined by linear inequalities, it is Borelian and hence  measurable.}

Because of  its critical role in Algorithm~\ref{alg:cap}, the decomposition of $A_{j}$ into $\Phi_{i}(a_{i},a_{i}^{\prime})$ and $\phi_{i}(a_{i},a_{i}^{\prime})$, particularly the latter, will become significant when we study the coincidence of QRE and $\Delta^{p}$-rationalizable action distributions in Section~\ref{sec:uncer}. \medskip

The following statement shows that the boundary imposed by each $q_{i,a_{i}}^{n}$ is tight.
\begin{proposition}\label{prop2}
For each $n \in \mathbb{N}$, $i = 1, 2$ and $a_{i} \in A_{i}$, $q_{i,a_{i}}^{n}$ is the minimum of probabilities that $a_{i}$ is used in all $n$-$\Delta^{p}$-rationalizable distributions, that is, $p_{i}(a_{i}) \geq q_{i,a_{i}}^{n}$ for each $n$-$\Delta^{p}$-rationalizable distribution $p_{i}$ and no number bigger than $q_{i,a_{i}}^{n}$ satisfies this property.
\end{proposition}

\begin{example}\label{exam:vacc}
Consider the asymmetric game of chicken in Table~\ref{TABIntro} (Goeree et al. \cite{ghp16}, pp.~25-26). Suppose that $p_{i}$'s are i.i.d. and each has an extreme value distribution with parameter $\lambda=0.5$, that is, the c.d.f. $F_{i}(\theta_{i,a_{i}}) = \exp(-\exp (-0.5 \theta_{i,a_{i}}))$ for each $i \in \{1,2\}$ and $a_{i} \in \{T,S\}$. Using Algorithm~\ref{alg:cap}, the sequences of minimums are depicted in Figure~\ref{fig:Conv0}. One can see that the sequences converge to the unique QRE. 
\begin{table}[ht]

\centering
 \begin{tabular}{c c c }

 \hline
 $1\setminus 2$ & $T$ & $S$ \\ 
 \hline
 $T $& $0, 0$ &  $6, 1$ \\ 

 $S$ & $1, 14$ & $ 2, 2$ \\

 \hline
\end{tabular}
\caption{An asymmetric game}
\label{TABIntro}

\end{table}

\begin{figure}[ht]
\centering
  \includegraphics[width=0.9\columnwidth]{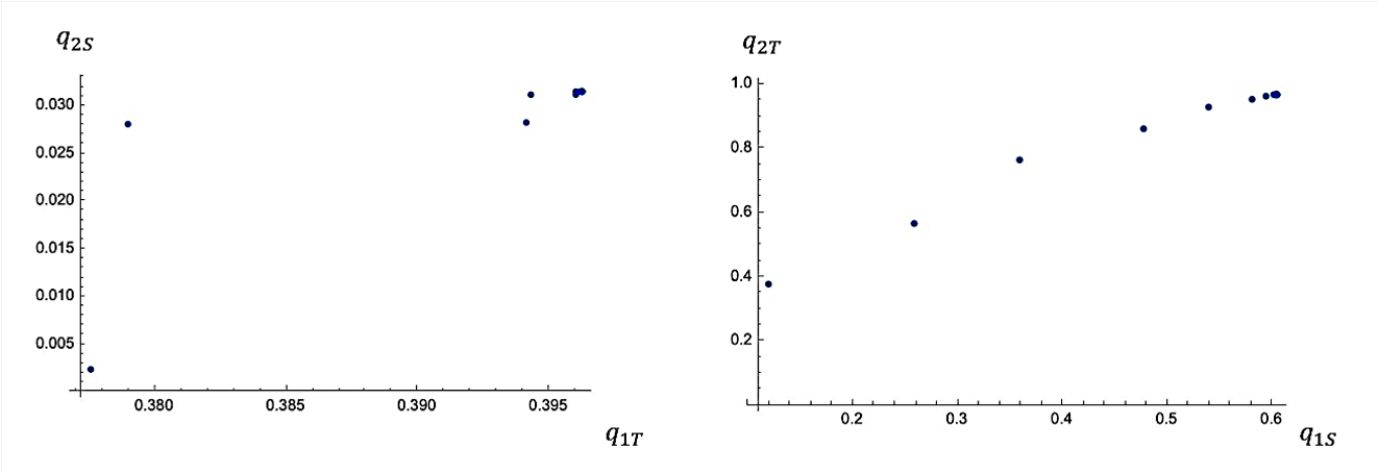}
  \caption{The convergence of minimums for each action}
  \label{fig:Conv0}
\end{figure}

 Goeree et al. \cite{ghp16} (Figure~2.4, p.~25) showed that as $\lambda$ increases, the number of QREs increases from one to multiple. One may wonder what would happen to the minimum sequences. By applying Algorithm~\ref{alg:cap}, one can see that when there are multiple QREs, for each $a_{i}$, the sequence $(q_{i,a_{i}}^{n})_{n=1}^{\infty}$ converges to the smallest proportion of $a_{i}$ in all QRE distributions. Still, QRE provides ``tight'' information for the $\Delta^{p}$-rationalizable results.  $\blacktriangle$
 
 \end{example}
 
So far, we have seen only examples in which each $(q_{i, a_i}^n)_{n=1}{\infty}$ converges to the proportion of $a_i$ used in some QRE. However, as the following example  shows, it is not always true. In general, QRE provide sharper predictions than $\Delta^p$-rationalizability.
 
\begin{example}\label{ex33}
Consider the asymmetric Matching Pennies game in Table~\ref{TAB4}. Here, for each $i =1,2$, $p_i=\prod_{k=H,T} p_{ik}$, where each $p_{ik}$ is the extreme value distribution with $\lambda=5$. This game has only one QRE (Goeree et al. \cite{ghp16}, Chapter 2.2).
\begin{table}[ht]

\centering
 \begin{tabular}{c c c }

 \hline
$1\setminus 2$ & $H$ & $T$ \\ 
 \hline
$H$ & $9,0$ &  $0,1$  \\ 

 $T$ & $0,1$ &  $1,0$ \\

 \hline
\end{tabular}
\caption{Example \ref{ex33}}
\label{TAB4}
\end{table}

It can be shown that for each $i = 1,2, a_{i} = H,T$, the infimum sequence $(q_{i,a_{i}}^{n})_{n=0}^{\infty}$ converges to some number extremely close to $0$, which means that their limits do not form the QRE since the sum of each pair is strictly less than 1.

\end{example}

\section{When \texorpdfstring{$\Delta^{p}$}{delta^p}-rationalizability implies QRE action distributions}\label{sec:uncer}

\noindent  Example~\ref{ex33} shows that the converse of Proposition~\ref{Thmai1} does not hold. It is  common  that the equilibria under examination form only a proper subset of the behavioral consequence consistent with its epistemic conditions. This is largely because of the decision-theoretic nature of EGT, which permits beliefs that, while possibly  incorrect  ex post, remain in line with the specified epistemic conditions. Nevertheless, Examples~\ref{ex11} and~\ref{exam:vacc} suggest that under some conditions, the converse of Proposition~\ref{Thmai1} could be true, that is, QRE provides  tight  (when there is a single QRE) or boundary (when there are multiple QREs) information about the $\Delta^{p}$-rationalizable action distributions. As pointed out in Example~\ref{exam:vacc}, we compare the following two objects: (i) the minimum proportion of each action in all QRE distributions, and (ii) the minimum proportion of each action in $n$th-order $\Delta^{p}$-rationalizable distributions. For (i), formally, for each $i \in I$ and $a_{i} \in A_{i}$, we define 
\begin{equation}\label{eq:lob}
\underline{\pi}_{i,a_{i}}=\min \{\pi_{i}(a_{i}): (\pi_{i}, \pi_{-i}) \in Q(G,\Theta,p) \text{ for some } \pi_{-i} \in \Delta(A_{-i})\},
\end{equation}
Because of the assumption about $p_{i}$ of continuity and existence of marginal density function for each $a_{i} \in A_{i}$, $\underline{\pi}_{i,a_{i}}$ is well defined.

For (ii), though it is already defined recursively as $q_{i,a_{i}}^{n}$ in Algorithm~\ref{alg:cap} in two-person games, we give it a general and explicit definition here, which could facilitate proofs. For each $n \in \mathbb{N}$, let
\begin{equation}
q_{i,a_{i}}^{n} = \min\{\pi_{i}(a_{i}): \pi_{i}  \text{ is } n\text{-}\Delta^{p}\text{-rationalizable}\}
\end{equation}
It is straightforward to see that the two definitions of $q_{i,a_{i}}^{n}$ coincide in two-person games.\medskip

First, the following result shows that the minimum sequence  $(q_{i,a_{i}}^{n})_{n =0}^{\infty}$ is convergent for each $i \in I$ and $a_{i} \in A_{i}$.

\begin{proposition}\label{PropDec}
For each $i \in I$ and $a_{i} \in A_{i}$, $(q_{i,a_{i}}^{n})_{n =0}^{\infty}$ is bounded and nondecreasing. Hence, there is some $q_{i,a_{i}}^{*}$ such that $\lim_{n \rightarrow \infty} q_{i,a_{i}}^{n} =q_{i,a_{i}}^{*}$. In addition, $q_{i,a_{i}}^{*} \leq \underline{\pi}_{i,a_{i}}$. 
\end{proposition}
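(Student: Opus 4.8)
The plan is to establish three things: that $(q^n_{i,a_i})_{n\in\mathbb{N}}$ is bounded, that it is non-decreasing, and that its limit $q^*_{i,a_i}$ does not exceed $\underline{\pi}_{i,a_i}$. Boundedness is immediate: each $q^n_{i,a_i}$ is an infimum of probabilities $\pi_i(a_i)$, so it lies in $[0,1]$; moreover $q^0_{i,a_i}=0$ since at Step~$0$ we have $\Sigma^0_{i,\Delta^p}=\Theta_i\times A_i$, and a pushforward assigning $a_i$ zero probability on a non-degenerate type distribution is $0$th-order rationalizable (at least when $A_i$ has more than one action; the degenerate one-action case is trivial).

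The substantive claim is monotonicity. First I would record the structural fact driving the whole procedure: $\Sigma^{n+1}_{i,\Delta^p}\subseteq\Sigma^n_{i,\Delta^p}$, which follows directly from Definition~\ref{def2} since survival at Step~$n+1$ requires survival at Step~$n$. The key observation is that the set of beliefs admissible at step $n+1$ is \emph{shrinking} in the right sense: a belief $\mu^i\in\Delta^p_i$ supporting survival at step $n+1$ must satisfy $\mu^i(\Sigma^n_{-i,\Delta^p})=1$, and because $\Sigma^n_{-i,\Delta^p}\subseteq\Sigma^{n-1}_{-i,\Delta^p}$, every such belief is also admissible at the previous step. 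Concretely, I would argue that if $\pi_i$ is $(n{+}1)$th-order $\Delta^p$-rationalizable via some $\mu_i$ with $\mu_i(\Sigma^{n+1}_{i,\Delta^p})=1$, then since $\Sigma^{n+1}_{i,\Delta^p}\subseteq\Sigma^n_{i,\Delta^p}$ we also have $\mu_i(\Sigma^n_{i,\Delta^p})=1$, so the same $\mu_i$ witnesses that $\pi_i$ is $n$th-order $\Delta^p$-rationalizable. Hence the set of $(n{+}1)$th-order rationalizable distributions is contained in the set of $n$th-order ones, and taking infima over a smaller set can only increase the infimum: $q^n_{i,a_i}\le q^{n+1}_{i,a_i}$. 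A bounded non-decreasing real sequence converges, giving $q^*_{i,a_i}=\lim_n q^n_{i,a_i}$.

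For the final inequality $q^*_{i,a_i}\le\underline{\pi}_{i,a_i}$, I would invoke Theorem~\ref{Thmai1}. Let $\pi=(\pi_j)_{j\in I}$ be any QRE. By Theorem~\ref{Thmai1}, $\pi_i$ is $\Delta^p$-rationalizable, meaning $\pi_i$ is $n$th-order $\Delta^p$-rationalizable for every $n$. Therefore $\pi_i(a_i)$ belongs to the set whose infimum defines $q^n_{i,a_i}$, so $q^n_{i,a_i}\le\pi_i(a_i)$ for every $n$, and passing to the limit gives $q^*_{i,a_i}\le\pi_i(a_i)$. Since this holds for every QRE $\pi$, taking the infimum over $Q(G,\Theta,p)$ on the right-hand side yields $q^*_{i,a_i}\le\underline{\pi}_{i,a_i}$, as defined in \eqref{eq:lob}.

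The argument is essentially a monotonicity-of-infima-over-nested-sets statement, so I do not expect a serious obstacle; the one point requiring care is confirming that a witnessing measure $\mu_i$ for order $n{+}1$ really is a legitimate witness for order $n$ without modification. This hinges precisely on the nesting $\Sigma^{n+1}_{i,\Delta^p}\subseteq\Sigma^n_{i,\Delta^p}$ together with the fact that the marginal and support conditions defining $n$th-order rationalizability are inherited verbatim. I would state that nesting as the first lemma-like step and then let everything else follow mechanically.
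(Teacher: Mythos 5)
Your proof is correct, but it takes a genuinely different route from the paper's. The paper argues by induction on the step $n$: it characterizes $q^{n+1}_{i,a_i}$ as the $p_i$-measure of the set of types $\theta_i$ for which $a_i$ is the \emph{only} surviving action, i.e., those satisfying inequality (\ref{optmax}) against every belief in the feasible region $\{q_{-i}: q^{n}_{j,a_j}\le q_j(a_j)\le 1,\ \sum_{a_j'}q_j(a_j')=1\}$, and then shows that by the inductive hypothesis this feasible region shrinks as $n$ grows, forcing the forced-type set to expand and hence $q^{n}_{i,a_i}\le q^{n+1}_{i,a_i}$. Your argument replaces all of this with the single structural observation $\Sigma^{n+1}_{i,\Delta^p}\subseteq\Sigma^{n}_{i,\Delta^p}$ (which is indeed immediate from Definition~\ref{def2}), from which any witness $\mu_i$ for $(n{+}1)$th-order rationalizability is verbatim a witness for $n$th-order rationalizability, so the sets of rationalizable distributions are nested and their infima are monotone. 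This is more elementary and avoids the somewhat informal step in the paper where the feasible belief set is identified with the box constraint determined by the previous infima (a relaxation, since not every point of that box is an achievable action marginal). Your handling of the final inequality is also more careful than the paper's: the paper invokes Theorem~\ref{Thmai1} as though $\underline{\pi}_{i,a_i}$ itself were a rationalizable distribution, whereas you apply Theorem~\ref{Thmai1} to each QRE $\pi$ separately, get $q^{n}_{i,a_i}\le\pi_i(a_i)$ for all $n$, pass to the limit, and only then take the infimum over $Q(G,\Theta,p)$ — this sidesteps any question of whether the infimum in (\ref{eq:lob}) is attained. What the paper's heavier machinery buys is not the proposition itself but the explicit apparatus (the sets defined via (\ref{optmax}) and the shrinking belief regions) that is reused directly in the proof of Theorem~\ref{Thmai2}; your proof, being purely order-theoretic, does not set up that later argument.
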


Drawing upon Proposition~\ref{PropDec}, the next step is to find the condition for $q_{i,a_{i}}^{*} = \underline{\pi}_{i,a_{i}}$. Before proceeding, it is imperative to specify the nature of this condition: whether it pertains to (additional) epistemic characteristics, the properties about $p$, or aspects relevant to the structure of the game. First, it appears that transparency of $\Delta^{p}$ already comprehensively addresses the information about beliefs encapsulated within QRE. Second, certain probability distributions $p_{i}$, especially those frequently employed in the literature, such as extreme value distributions, have the potential to introduce substantial constraints. These constraints might encompass a wealth of relevant information concerning the possibility of the coincidence. However, our focus is drawn to the intriguing observation that in certain games, as exemplified in Example~\ref{exam:vacc}, $q_{i,a_{i}}^{*} = \underline{\pi}_{i,a_{i}}$ holds regardless of the underlying probability distributions of idiosyncrasies. Therefore, we focus on properties of the structure of the game.\footnote{We are not the first to investigate the condition on the properties of the structure of a game for the coincidence of an equilibrium and a solution concept characterizing the behavioral consequence of some epistemic conditions. For example, Milgrom and Roberts \cite{mr90} studied the crucial role of supermodularity for the coincidence of rationalizability and Nash equilibrium. }

Our purpose now is to find conditions about the game under which $q_{i,a_{i}}^{*}=\underline{\pi}_{i,a_{i}}$, or, equivalently, the following statement, holds.
\begin{statement}\label{st1}
For each $i \in I$ and $a_{i} \in A_{i}$, $q_{i,a_{i}}^{*}$ represents  the probability that action $a_{i}$ is used in some QRE.
\end{statement}
Indeed, since $\underline{\pi}_{i,a_{i}}$ is the minimum proportion of action $a_{i}$ in all QRE, if $q_{i,a_{i}}^{*}$ is used in some QRE, it follows that $q_{i,a_{i}}^{*} \geq \underline{\pi}_{i,a_{i}}$. Combining this with Proposition~\ref{PropDec}, we obtain $q_{i,a_{i}}^{*} = \underline{\pi}_{i,a_{i}}$. Before presenting the formal condition and result, we use a heuristic example to facilitate a better understanding of the underlying idea.

\begin{example}\label{exa:heu}
Given the $2 \times 2$ game in Table~\ref{TAB5} with $p = (p_{1},p_{2})$, we consider the two cases in which Statement~\ref{st1} does and does not hold, respectively. 

\begin{table}[ht]
\centering
 \begin{tabular}{c c c }
\hline
 & $L$ & $R$ \\ 
 \hline
$T$ & $\alpha_{1}, \alpha_{2}$ &  $\beta_{1},\beta_{2}$  \\ 
$U$ & $\gamma_{1},\gamma_{2}$ &  $\delta_{1},\delta_{2}$ \\ 
 \hline
\end{tabular}
\caption{A $2 \times 2$ game}
\label{TAB5}
\end{table}

\textbf{Case 1 (Statement \ref{st1} holds).} Recall that in Section~\ref{sec:examp}, we defined $\Phi_{i}(a_{i},a_{i}^{\prime})$ to be the set of the opponent's actions that maximize the improvement of $i$'s payoff if she deviates from $a_{i}$ to $a_{i}^{\prime}$, and $\phi_{i}(a_{i},a_{i}^{\prime})$ is its complement (see formulas~(\ref{all}a) and (\ref{all}b)); also, we defined $\overline{H}_{i}(a_{i},a_{i}^{\prime})$ to be the amount of improvement, that is, $\overline{H}_{i}(a_{i},a_{i}^{\prime})  =  u_{i}(a_{i}^{\prime},a_{j}) - u_{i}(a_{i},a_{j})$ for an $a_{j} \in \Phi_{i}(a_{i},a_{i}^{\prime})$. Here, we assume that $\overline{H}_{1}(T,U) = \gamma_{1}-\alpha_{1} > \delta_{1}-\beta_{1}=:\underline{H}_{1}(T,U) $ and $\overline{H}_{2}(L,R) = \beta_{2}-\alpha_{2} > \delta_{2}-\gamma_{2}=: \underline{H}_{2}(L,R)$. By this assumption, it is easy to see that $\phi_{1}(T,U) = \{R\}$, $\phi_{1}(U,T) = \{L\}$, $\phi_{2}(L,R) = \{U\}$, and $\phi_{2}(R,L) = \{T\}$.

The iterative procedure in Algorithm~\ref{alg:cap} shows that, in a $2 \times 2$ game, at each step $n$, the value $q_{i,a_{i}}^{n}$ completely relies upon $q_{j,a_{j}}^{n-1}$ for $a_{j} \in \phi_{i}(a_{i},a_{i}^{\prime})$: by replacing $q_{j,a_{j}}$ with $q_{j,a_{j}}^{n-1}$ for $a_{j} \in \phi_{i}(a_{i},a_{i}^{\prime})$ and $q_{j,a^{\prime}_{j}}$ with $(1-q_{j,a_{j}}^{n-1})$ for $a^{\prime}_{j} \in \Phi_{i}(a_{i},a_{i}^{\prime})$ in the right-hand side of (\ref{conde}) , one obtains the area in $\Theta_{i}$ for which $a_{i}$ is absolutely better than $a_{i}^{\prime}$, whose measure is $q_{i,a_{i}}^{n}$. By continuity, in the limit we have
\begin{equation}\label{poi1}
q_{1T}^{*} =p_{1}(\{\theta_{1}: \theta_{1T}-\theta_{1U} \geq (1-q_{2R}^{*})\overline{H}_{1}(T,U) + q_{2R}^{*}\underline{H}_{1}(T,U)\})= p_{1}(E_{1T}(1-q_{2R}^{*}, q_{2R}^{*}))
\end{equation}
\begin{equation}
 q_{1U}^{*} =p_{1}(\{\theta_{1}:\theta_{1T}-\theta_{1U} \leq q_{2L}^{*}\overline{H}_{1}(T,U) + (1-q_{2L}^{*})\underline{H}_{1}(T,U)\})= p_{1}(E_{1U}(q_{2L}^{*}, 1-q_{2L}^{*}))
\end{equation}
\begin{equation}
q_{2L}^{*} =p_{2}(\{\theta_{2}:\theta_{2L}-\theta_{2R} \geq (1-q_{1U}^{*})\overline{H}_{2}(L,R) + q_{1U}^{*}\underline{H}_{2}(L,R)\})= p_{2}(E_{2L}(1-q_{1U}^{*}, q_{1U}^{*}))
\end{equation}
\begin{equation}\label{oyi1}
 q_{2R}^{*} =p_{2}(\{\theta_{2}:\theta_{2L}-\theta_{2R} \leq q_{1T}^{*}\overline{H}_{2}(L,R) + (1-q_{1T}^{*})\underline{H}_{2}(L,R)\})= p_{2}(E_{2R}(q^{*}_{1T},1-q^{*}_{1T}))
\end{equation}

Is there any QRE in which player 1 uses $T$ with probability $q_{1T}^{*}$? The answer is yes. Consider $(q_{1},q_{2}) \in \Delta(A_{1}) \times \Delta(A_{2})$ such that
\begin{displaymath}
q_{1}(T) = q_{1T}^{*},{ \ \ } q_{1}(U)=1- q_{1T}^{*}, { \ \ } q_{2}(L) = 1-q_{2R}^{*},{ \ \ }  q_{2}(R)=q_{2R}^{*}.
\end{displaymath}
We show that $(q_{1},q_{2})$ is a QRE.  First, by (\ref{poi1}) and (\ref{oyi1}), it is clear that $q_{1}(T) = q_{1T}^{*} = p_{1}(E_{1T}(1-q_{2R}^{*}, q_{2R}^{*}))$ and $q_{2}(R) = q_{2R}^{*}=p_{2}(E_{2R}(q^{*}_{1T},1-q^{*}_{1T}))$. For action $U$, we have 
\begin{equation*}
\begin{split}
& p_{1}(E_{1U}(1-q_{2R}^{*}, q_{2R}^{*}) ) =p_{1}(\{\theta_{1}: \theta_{1T}-\theta_{1U} \leq (1-q_{2R}^{*})\overline{H}_{1}(T,U)+q_{2R}^{*}\underline{H}_{1}(T,U)\}) \\
& = p_{1}(\Theta_{1} \setminus \{\theta_{1}: \theta_{1T}-\theta_{1U} > (1-q_{2R}^{*})\overline{H}_{1}(T,U)+q_{2R}^{*}\underline{H}_{1}(T,U)\})\\
& = 1 - p_{1}(E_{1T}(1-q_{2R}^{*}, q_{2R}^{*})) \\
& = 1- q_{1T}^{*}
\end{split}
\end{equation*}
Therefore, $p_{1}(E_{1U}(1-q_{2R}^{*}, q_{2R}^{*})) = q_{1}(U)$.\footnote{The third equality in the calculation holds because $p_{i}$, as we assumed, is absolutely continuous.} Similarly, we can show that $p_{2}(E_{2L}(q_{1T}^{*}, 1-q_{1T}^{*}) )= q_{2}(L)$. By Definition~\ref{def1}, $q$ is indeed a QRE. Note that $q$ is also a QRE in which $R$ is used by $q_{2R}^{*}$. In a similar manner, we can build a QRE in which $U$ is used by $q_{1U}^{*}$ and $L$ is used by $q_{2L}^{*}$. Therefore, in this case, $q_{i,a_{i}}^{*} = \underline{\pi}_{i,a_{i}}$ for each $i\in I$ and $a_{i} \in A_{i}$. In particular, when QRE is unique, it is the only $\Delta^{p}$-rationalizable distribution profile.\medskip

The key for the coincidence here is a separated interdependence between actions of the two players: $(q_{1,T}^{n})_{n =0}^{\infty}$ and $(q_{2,R}^{n})_{n =0}^{\infty}$ are interdependent, and $(q_{1,U}^{n})_{n =0}^{\infty}$ and $(q_{2,L}^{n})_{n =0}^{\infty}$ are interdependent. Importantly, these two groups of interdependent actions are isolated from each other. Under this condition, for each action $a_{i}$ (and its interdependent partner $a_{j}$) we can implement $q_{i,a_{i}}^{*}$ (and $q_{j,a_{j}}^{*}$) with a QRE by adjusting the shares of the ``irrelevant'' actions. In contrast, when this separated interdependence is not satisfied, the coincidence may not hold.\medskip

\textbf{Case 2 (Statement \ref{st1} does not hold).} Assume that $\overline{H}_{1}(T,U) =  \delta_{1}-\beta_{1} > \gamma_{1}-\alpha_{1} $ and, still, $\overline{H}_{2}(L,R)= \beta_{2}-\alpha_{2} >\delta_{2}-\gamma_{2}$. In the vein of the argument above, one can see that $q_{1T}^{*}$ relies upon $q_{2L}^{*}$, $q_{2L}^{*}$ relies upon $q_{1U}^{*}$, $q_{1U}^{*}$ relies upon $q_{2R}^{*}$, and $q_{2R}^{*}$ relies upon $q_{1T}^{*}$. In other words, the interdependence traverses over all actions and $A_{1} \cup A_{2}$ is not divided into two interdependent groups.

Is there any QRE in which player 1 uses $T$ with probability $q_{1T}^{*}$? Sometimes. When $q_{2L}^{*}+q_{2R}^{*} =1$ (which implies that $q_{1T}^{*}+q_{1U}^{*} =1$), the answer is yes. However, when $q_{2L}^{*}+q_{2R}^{*} <1$ (which implies that $q_{1T}^{*}+q_{1U}^{*} <1$), the answer is definitely no. To see this, suppose that in some QRE $q=(q_{1},q_{2})$, player 1 uses $T$ with probability $q_{1}(T)=q_{1T}^{*}$. Using an argument similar to that in Case~1, $q_{2}(L)=q_{2L}^{*}$ implies that $q_{1}(U)=q_{1U}^{*}$, and consequently it leads to $q_{2}(R) = q_{2R}^{*}$. However, by assumption, $p_{2}(L)+p_{2}(R)=q_{2L}^{*}+q_{2R}^{*} <1$, which means that $q_{2}$ is not even a probability distribution. \medskip

Case~2 illustrates why the  separated interdependence  of actions is relevant. When all actions are connected by the interdependence relation, fixing one action's share to be $q_{i,a_{i}}^{*}$ means that the same has to be done to every action; in other words, the whole system is  inflexible  and there is no space to adjust the probabilities of  irrelevant  actions to form a QRE. It does not cause any problem when $\sum_{a_{i} \in A_{i}}q_{i,a_{i}}^{*}=1$ for each $i$ (as in Example~\ref{ex11}, which is rare); yet the coincidence does not hold when the sum is smaller than 1, for example, when there are multiple QREs. $\blacktriangle$
\end{example}

We now formalize the idea. Consider a two-person game $G=\langle A_{1},A_{2},u_{1},u_{2}\rangle$. Recall that, as defined in Section~\ref{sec:examp}, $\phi_{i}(a_{i},a_{i}^{\prime}) =\{a_{j} \in A_{j}:u_{i}(a_{i}^{\prime},a_{j})-u_{i}(a_{i},a_{j}) < \overline{H}_{i}(a_{i},a_{i}^{\prime})\}$. Based on $\phi_{i}$, we define a correspondence $\tilde{\phi}_{i}:A_{i} \rightarrow 2^{A_{j}}$ such that $\tilde{\phi}_{i}(a_{i})=\cup_{a_{i}^{\prime}\neq  a_{i}}\phi_{i}(a_{i},a_{i}^{\prime})$ for each $i \in I$ and $a_{i} \in A_{i}$. Each $a_{j} \in \tilde{\phi}_{i}(a_{i})$ is called an \emph{influential action} for $a_{i}$ since the minimums of the probabilities of those actions determine the area in $\Theta_{i}$ in which $a_{i}$ is a best response. An action $a_{i} \in A_{i}$ is called \emph{\textbf{non-serial}} iff $\tilde{\phi}_{i}(a_{i})= \emptyset$; it is called \textbf{\emph{eventually non-serial}} iff it is non-serial or for some $a_{j} \in \tilde{\phi}_{i}(a_{i})$, $\tilde{\phi}_{j}(a_{j} )= \emptyset$.\footnote{The names are adopted from modal logic. See, for example, Blackburn, De Rijke, and Venema \cite{bdv01}.} Note that one action's  non-seriality implies that each action of the player is non-serial; consequently, non-seriality indicates degenerate cases. A game is called \emph{serial} iff no action is non-serial.

We can visualize $(\tilde{\phi}_{1},\tilde{\phi}_{2})$ with a directed graph $(\Rightarrow, A_{1}\cup A_{2})$ such that for each $a_{i},a_{j} \in A_{1} \cup A_{2}$, $a_{i} \Rightarrow a_{j}$ iff $a_{j} \in \tilde{\phi}_{i}(a_{i})$. For example, Figure~\ref{fig:DIR} (1) is the graph of the game in Table~\ref{TABIntro}, and Figure~\ref{fig:DIR} (2) is that of Example~\ref{ex11}. Intuitively, one can see that the former satisfies the separated-interdependence  condition mentioned in Example~\ref{exa:heu}, while the latter does not. Theorem~\ref{Thmai2} formalizes the heuristic idea. Note that, though the condition can be defined for all two-person games,  in Theorem~\ref{Thmai2} we only give the form for $2 \times 2$ games and leave the generalization to \ref{appen2}. The rationale lies in two considerations: first, the condition has a more straightforward representation in $2 \times 2$ games; second, as demonstrated in Remark~\ref{rem:discuss} and Proposition~\ref{gena} in \ref{appen2}, the condition is rarely satisfied in general cases.

\begin{figure}[ht]
 \centering
   \includegraphics[width=0.5\columnwidth]{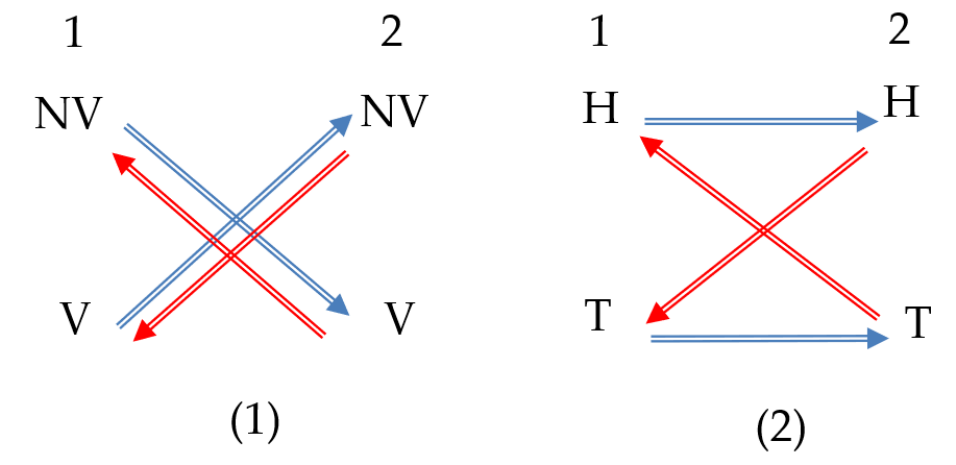}
   \caption{The graphs for two games}
   \label{fig:DIR}
\end{figure}

\begin{theorem}\label{Thmai2}
Consider a $2 \times 2$ game and $a_{i} \in A_{i}$ for some $i \in I$. If one of the following conditions is satisfied, then $q_{i,a_{i}}^{*} = \underline{\pi}_{i,a_{i}}$:
\begin{itemize}
\item[\textbf{(C1)}] $a_{i}$ is eventually non-serial, or

\item[\textbf{(C2)}] $|\tilde{\phi}_{i}(a_{i})|=1$ and $\tilde{\phi}_{j}(\tilde{\phi}_{i}(a_{i})) = \{a_{i}\}$.
\end{itemize}

Further, if $|Q(G,\Theta, p)|>1$ and both C1 and C2 are violated for one action, $q_{i,a_{i}}^{*}<\underline{\pi}_{i,a_{i}}$ for each $i \in I$ and $a_{i} \in A_{i}$.
\end{theorem}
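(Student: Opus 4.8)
The plan is to reduce the whole $2\times 2$ analysis to one‑dimensional scalar maps read off the directed graph $(\Rightarrow, A_{1}\cup A_{2})$, and then to compare the rationalization limit $q_{i,a_i}^{*}$ with the QRE fixed points edge by edge. First I would fix $i$, write the other action as $a_{i}'$ and the payoff gap $d_{i}(a_{-i}) = u_{i}(a_{i}',a_{-i})-u_{i}(a_{i},a_{-i})$, and note that $a_{i}$ is the unique best reply under a belief $q_{-i}$ precisely when $\theta_{i,a_i}-\theta_{i,a_i'}\ge\sum_{a_{-i}}q_{-i}(a_{-i})d_{i}(a_{-i})$. Because $A_{-i}$ has two elements, this right‑hand side is affine in the single weight the belief puts on the marginal action $\Phi_{i}(a_{i})$, and over the $n$th‑order feasible beliefs it is maximized at the extreme point carrying the minimal feasible weight $q^{n}_{-i,\Phi_i(a_i)}$ there. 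Setting $G_{i}^{a_i}(t)=p_{i}(\{\theta_{i}:\theta_{i,a_i}-\theta_{i,a_i'}\ge \overline{H}_{i}^{a_i,a_i'}-t(\overline{H}_{i}^{a_i,a_i'}-\underline{H}_{i}^{a_i,a_i'})\})$ I obtain a continuous nondecreasing map $G_{i}^{a_i}:[0,1]\to[0,1]$ with the recursion $q_{i,a_i}^{n+1}=G_{i}^{a_i}(q^{n}_{\Phi_i(a_i)})$, the complementation identity $G_{i}^{a_i'}(t)=1-G_{i}^{a_i}(1-t)$, and—because beliefs live on two points, so that $\pi_{-i}^{*}(\Phi_i(a_i))=1-\pi_{-i}^{*}(\text{other})$—the QRE identity $\pi_{i}^{*}(a_{i})=G_{i}^{a_i}(\pi_{-i}^{*}(\Phi_i(a_i)))$. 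Thus both procedures run on the \emph{same} scalar maps along the edges of the graph.

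Since $a_{i},a_{i}'$ are simultaneously serial or non‑serial and, when serial, have distinct marginals, a serial game's graph is either two disjoint $2$‑cycles or one $4$‑cycle; this classification drives the sufficiency part. Under \textbf{C1} the marginal $b=\Phi_i(a_i)$ is non‑serial, so $G_{-i}^{b}$ is constant; then $q_{-i,b}^{n}=\pi_{-i}^{*}(b)$ equals one and the same constant for every $n\ge1$ and every QRE, whence $q_{i,a_i}^{*}=G_{i}^{a_i}(\text{that constant})=\pi_{i}^{*}(a_{i})$ for every QRE, giving $q_{i,a_i}^{*}=\underline{\pi}_{i,a_i}$. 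Under \textbf{C2} the pair $(q_{i,a_i},q_{-i,b})$ forms a closed $2$‑cycle with monotone composite $H=G_{i}^{a_i}\circ G_{-i}^{b}$; iterating from $0$ exhibits $q_{i,a_i}^{*}$ as the least fixed point of $H$, every QRE supplies a fixed point of $H$, and conversely any fixed point of $H$ extends—via the complementation identity applied to the partner cycle $a_{i}'\leftrightarrow b'$—to a genuine QRE. Hence the least fixed point is itself a QRE value, and with Proposition \ref{PropDec} this yields $q_{i,a_i}^{*}=\underline{\pi}_{i,a_i}$.

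For the strict inequality I would first observe that an action violating both \textbf{C1} and \textbf{C2} forces, by the classification, the single $4$‑cycle $a_{i}\to b\to a_{i}'\to b'\to a_{i}$. Composing the four scalar maps around it and repeatedly invoking the complementation identity, the key algebraic fact to establish is that the four‑step rationalization map is the \emph{square} of the QRE self‑map, $\Psi:=G_{i}^{a_i}\circ G_{-i}^{b}\circ G_{i}^{a_i'}\circ G_{-i}^{b'}=\Xi\circ\Xi$, where $\Xi(x)=G_{i}^{a_i}(G_{-i}^{b}(1-x))$ is continuous and \emph{non‑increasing}. Consequently $\underline{\pi}_{i,a_i}$ is the unique fixed point of $\Xi$, whereas $q_{i,a_i}^{*}$ is the least fixed point of $\Xi^{2}$ reached by the monotone iteration from $0$. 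The strict gap $q_{i,a_i}^{*}<\underline{\pi}_{i,a_i}$ is then exactly the assertion that $\Xi$ carries a nontrivial period‑$2$ orbit, whose lower branch is the first fixed point of $\Xi^{2}$ met from below; the cyclic symmetry of the $4$‑cycle transports this strict gap to all four actions, and the same $\Xi^{2}=\Psi$ identity shows it cannot vanish unless $\Xi$ has no $2$‑periodic point.

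I expect the delicate point—and the main obstacle—to be pinning down the exact role of the hypothesis $|Q(G,\Theta,p)|>1$. The reduction above shows that multiplicity of QRE occurs only in the monotone ($2$‑cycle) regime, where \textbf{C2} already holds, which sits in tension with the $4$‑cycle structure forced by the simultaneous failure of \textbf{C1} and \textbf{C2}; so the hypothesis must be read as furnishing the oscillation (the period‑$2$ orbit of the anti‑monotone $\Xi$) that produces the undershoot, rather than as multiplicity of $\Xi$‑fixed points. Making this quantitative is the crux: I would bound the discrete slope of $\Xi$ at its fixed point, show that steepness beyond $-1$ creates the period‑$2$ orbit and hence keeps the iterates $(q_{i,a_i}^{n})$ strictly below $\underline{\pi}_{i,a_i}$, and confirm—by combining the monotone convergence of Proposition \ref{PropDec} with the $\Xi^{2}$ identity—that the limit is genuinely the period‑$2$ lower branch. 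Reconciling the stated multiplicity condition with this mechanism, and isolating the precise regime in which it bites, is where the argument will require the most care.
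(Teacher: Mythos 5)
Your reduction to the scalar maps $G_i^{a_i}$ and the recursion $q_{i,a_i}^{n+1}=G_i^{a_i}\bigl(q^{n}_{-i,b}\bigr)$ with $b=\Phi_i(a_i)$ is a faithful (and more rigorous) formalization of what the paper does implicitly, and your two sufficiency arguments coincide in substance with the paper's: for \textbf{C1} the paper likewise observes that the marginal action's probability is pinned down from the first round, so that the sets $E^{n}_{i,a_i}$ stabilize; for \textbf{C2} the paper shows that the limits $q^{*}_{i,a_i}$, $q^{*}_{-i,b}$ together with their complements satisfy the QRE fixed-point equations (``mutually fulfilling'') and then invokes Proposition \ref{PropDec} --- which is exactly your ``the least fixed point of $H=G_i^{a_i}\circ G_{-i}^{b}$ extends to a QRE'' argument. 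So on C1 and C2 you are correct and essentially on the paper's route, with your least-fixed-point phrasing making the limit step (which needs continuity of $G_i^{a_i}$, i.e.\ the density assumption on $p_i$) explicit.

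The gap is in the final claim, which you explicitly leave open. But note that the ``tension'' you flag is not an obstacle to be reconciled --- it \emph{is} the resolution. Your own classification shows that failure of both C1 and C2 for some action forces either a non-serial action (in which case that player's QRE distribution is belief-independent and the QRE is unique) or the $4$-cycle, where QRE values are exactly the fixed points of the continuous non-increasing map $\Xi$, of which there is precisely one. Hence $|Q(G,\Theta,p)|>1$ is incompatible with the simultaneous failure of C1 and C2: the hypothesis of the final claim is unsatisfiable, the claim is vacuously true, and the pieces you already have (the $2$-cycle/$4$-cycle classification plus uniqueness of the fixed point of an anti-monotone continuous self-map of $[0,1]$) constitute a complete proof. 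Had you drawn this conclusion you would have finished --- and you would also have exposed that the paper's own proof of this part is unsound, since it manipulates two supposedly distinct QREs (asserting they differ in every coordinate, so that $\sum_{a_j}\underline{\pi}_{j,a_j}<1$) in precisely the regime where your $\Xi$-argument shows no two distinct QREs exist. The genuine phenomenon behind Example \ref{ex33} --- $q^{*}_{i,a_i}<\underline{\pi}_{i,a_i}$ in $4$-cycle games --- is, as you correctly identify via $\Psi=\Xi^{2}$, the existence of a genuine period-$2$ orbit of $\Xi$, and it occurs with a \emph{unique} QRE; a meaningful converse statement would have to be phrased in those terms rather than via $|Q(G,\Theta,p)|>1$.
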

The proof can be found in \ref{sec:proof}, which formalizes and generalizes the intuition in Example~\ref{exa:heu}.

The condition in Theorem~\ref{Thmai2} has two parts: C1 means the payoff matrix of player $i$ has order 1, that is, the game is non-generic. C2 is less restrictive, and is satisfied by many games intensively studied in the literature: for example, the chicken game (both symmetric as in Rapoport and Chammah \cite{rc66} and asymmetric as in Goeree et al. \cite{ghp16}, pp.~25--26), coordination games (Goeree et al. \cite{ghp16}, pp.~29--30, Zhang \cite{zb13}, Zhang and Hofbauer \cite{zh16}), and many (but not all) dominance-solvable games. However, a Matching-Pennies-style game (Ochs \cite{oc95}) never satisfies C2.\footnote{One may see some similarity between our condition and potential games (Monderer and Shapley \cite{ms96}). However, the two concepts are mathematically independent.}

We emphasize that the condition in Theorem~\ref{Thmai2} is only a sufficient condition when $|Q(G,\Theta, p)|=1$, which means that, in this case, if it is violated, both $q_{i,a_{i}}^{*}=\underline{\pi}_{i,a_{i}}$ and $q_{i,a_{i}}^{*}<\underline{\pi}_{i,a_{i}}$ are possible; indeed, the former happens in Example~\ref{ex11}, and the latter in Example~\ref{ex33}. Only when $|Q(G,\Theta, p)|>1$ the condition is both sufficient and necessary.

A direct implication of Theorem~\ref{Thmai2} is that when $|Q(G,\Theta, p)|=1$ and either C1 or C2 in Theorem~\ref{Thmai2} is satisfied, $C(G,\Theta,p) = \times_{j \in I} \Sigma_{j, \Delta^p}^\infty$. In that case, QRE provides a prediction that has a solid epistemic foundation: it could be reached only by individual strategic reasoning in each population.\footnote{The condition for the uniqueness of QRE has not been  fully studied yet. See Melo \cite{me21} for a survey.}

Note that the condition is only about $G$ and is independent of $\langle\Theta, p\rangle$. Consequently, changing the distribution or the value of the parameter does not alter the result. This characteristic may prove advantageous for experimental and empirical research in which only a fixed set of distributions can be practically implemented (in agents' beliefs). We leave the investigation of conditions on $p$ to the next section.

\begin{remark}\label{rem:discuss}
As mentioned above, the condition for the coincidence could be given for general two-person games. We leave the technical details to \ref{appen2} and only sketch the idea here. For each $a_{i}$, we denote by $R(a_{i})$ the set of actions that are directly or indirectly influenced by or influence it. The statement in Theorem~\ref{Thmai2} still holds if we replace (C2) by (C2'): $R(a_{i}) \neq A_{1} \cup A_{2}$. We can show that (C2) and (C2') are equivalent (Lemma \ref{strin}). Yet, as shown in Proposition~\ref{gena}, for a generic two-person game in which at least one player has more than two  actions, (C2) is never satisfied for any action. Therefore, the condition in Theorem~\ref{Thmai2} is quite stringent in generic games.
\end{remark}

\section{Rank-dependent choice equilibrium and \texorpdfstring{$\Delta^{M}$}{delta^M}-rationalizability}\label{rdceep}

\noindent So far, we have shown that transparency of the distribution profile $p$ of idiosyncratic payoff shocks in populations is the critical epistemic condition for QRE. Moreover, we presented a condition for the payoff structure that ensures that QRE coincides with the behavioral consequence characterizing the epistemic conditions. However, there may be concerns regarding the mixed nature of our result. On one hand, the independence from $p$ allows for broad applicability across diverse probability distributions of idiosyncratic payoff shocks. On the other hand,  applications in the literature predominantly rely on specific functional forms for $p$ or, more frequently, assume that $p$ is i.i.d. Consequently, our results may initially appear excessively general for direct applicability, and further research is warranted to align with these specific assumptions.

In this section, we attempt to solve this problem by investigating the epistemic condition behind the solution concept called \emph{rank-dependent choice equilibrium} first developed in Goeree et al. \cite{getal19} (see also Goeree and Louis \cite{gl21}). The emphasis pertains to a crucial attribute of the structural quantal response function: ordinal monotonicity.

\subsection{Rank-dependent choice equilibrium and the epistemic condition}\label{sec:cm}
\noindent Fix a static game $G = \langle I, (A_{j},u_{j})_{j \in I}\rangle$. For each $i \in I$ and for an arbitrary vector $v_{i} \in \mathbb{R}^{A_{i}}$, a \emph{ranking} of $v_{i}$, denoted by $r(v_{i})$, is a complete preorder $\leq_{i}$ on $A_{i}$ such that for each $a_{i}, a_{i}^{\prime} \in A_{i}$, $a_{i} \leq_{i} a_{i}^{\prime}$ if and only if $v_{i}(a_{i}) \leq v_{i}(a_{i}^{\prime})$.\footnote{Recall that a preorder on a set is a binary relation satisfying reflexivity and transitivity. A  preorder is complete iff every pair of elements in the set is comparable.} For each $\pi_{-i} \in \Delta(A_{-i})$, define $u_{i}(\pi_{-i})=(u_{i}(a_{i}, \pi_{-i}))_{a_{i} \in A_{i}}$, that is, the vector in $\mathbb{R}^{A_{i}}$ that  assigns to each action the expected payoff generated by it against $\pi_{-i}$. One can see that $r(u_{i}(\pi_{-i}))$ is a well-defined preorder on $A_{i}$. For $i \in I$, its \emph{rank-dependent response set} is defined as
\begin{equation*}
RDR_{i}(G) = \{\pi \in \times_{j \in I} \Delta(A_{j}): r(\pi_{i}) = r(u_{i}(\pi_{-i})) \}
\end{equation*}
Here, the condition $r(\pi_{i}) = r(u_{i}(\pi_{-i}))$ means that, given $\pi_{-i}$, the more payoff an action in $A_{i}$ generates, the more frequently it is used in $\pi_i$. Let $RDR(G) = \cap_{i \in I}RDR_{i}(G)$. Goeree et al. \cite{getal19} give the following definition.

\begin{definition}\label{def:rdce}
Given a static game $G$, the \textbf{rank-dependent choice equilibrium (RDCE)} is defined as
\begin{equation*}
RDCE(G) = cl(RDR(G))
\end{equation*}

\end{definition}

Goeree et al.'s \cite{getal19} Theorem~1 implies that $RDCE(G)$ is nonempty.\footnote{Our definition of ranking  is different from but equivalent to that in Goeree et al. \cite{getal19}. Goeree et al. \cite{getal19} focus on each action $a_{i} \in A_{i}$ and define $\rank(v_{i}(a_{i}))$ as the set of possible ranks of $v_{i}(a_{i})$ when the elements of $v_{i}$ are listed in non-increasing order. For example, for $v_{i}(a,b,c,d) = (4,2,1,2)$, $\rank(v_{i}(b)) =\rank(v_{i}(d))= \{2,3\}$. One can see that given $v_{i} \in \mathbb{R}^{A_{i}}$, $r(v_{i})$ can be directly generated from $(\rank(v_{i}(a_{i}))_{a_{i} \in A_{i}}$ and vice versa. Hence the two definitions are equivalent. We adopt our  definition since complete preorder provides an intuitive and manageable macrolevel property of action distributions consistent with the epistemic conditions that we are going to define.}\medskip

RDCE looks more ``chaotic'' than QRE because  the stability embodied in the equilibrium is built only upon ordinal monotonicity. In this equilibrium, the frequency-based order of actions mirrors the order of their associated payoffs, and no further refinement is needed to optimize  (in the objective sense) the choice.  This characteristic prompts a reevaluation of the fundamental assumptions of EGT, namely rationality and common belief in rationality, and their applicability within this context.

In fact, we can still assume that for each $i$, the idiosyncratic payoff shock follows a distribution $p_{i}$ (on $\Theta_{i} = \mathbb{R}^{A_{i}}$), but $p_{i}$ is no longer transparent. Each agent still has her conjecture $\mu^{i}$ about others' types and actions, and she still maximizes her payoff based on  her belief for the realized type. Therefore, we can still assume rationality and common belief in rationality. However, since agents now cannot agree on the specific form of $p$, $\Delta^{p}$ no longer provides the proper restriction on their beliefs. Hence, the question is which epistemic condition $\Delta$ is appropriate to capture RDCE. One can see in Definition~\ref{def:rdce} that, instead of a specific form of $p$, the essential principle is a macrolevel property  that prescribes better actions to be used more frequently. In the literature on QRE, this property is called \emph{monotonicity}, which is essential in the definition of regular QRE and in many applications (see, for example, Goree, Holt, and Palfrey \cite{ghp05}).

We now formulate this intuition. Recall that, as defined in Section~\ref{sec:qredef}, for each $a_{i} \in A_{i}$, $E_{i,a_{i}}(\pi_{-i})$ is the set of $\theta_{i} \in \Theta_{i}$ for which $a_{i}$ is a best response to $\pi_{-i}$. We say $p_{i} \in \Delta(\Theta_{i})$ is \emph{monotonic} iff the following condition holds for each $\pi_{-i} \in \Delta(A_{-i})$ and each $a_{i},a_{i}^{\prime} \in A_{i}$:
\begin{equation}\label{mmm}
 \tag{\textbf{M}}
 p_{i}(E_{i,a_{i}}(\pi_{-i})) \geq p_{i}(E_{i,a_{i}^{\prime}}(\pi_{-i})) \text{ if and only if } u_{i}(a_{i},\pi_{-i}) \geq u_{i}(a_{i}^{\prime},\pi_{-i})
 \end{equation}
 Condition (\ref{mmm}) can be decomposed into two parts: \textbf{(M1)} If $u_{i}(a_{i},\pi_{-i}) > u_{i}(a_{i}^{\prime},\pi_{-i})$, then $p_{i}(E_{i,a_{i}}(\pi_{-i})) > p_{i}(E_{i,a_{i}^{\prime}}(\pi_{-i}))$, and  \textbf{(M2)} If $u_{i}(a_{i},\pi_{-i}) = u_{i}(a_{i}^{\prime},\pi_{-i})$, then $p_{i}(E_{i,a_{i}}(\pi_{-i})) = p_{i}(E_{i,a_{i}^{\prime}}(\pi_{-i}))$. We use $M_{i}$ to denote the set of all monotonic distributions on $\Theta_{i}$ and let $M = (M_{j})_{j \in I}$. A belief $\mu^{i} \in \Delta(\Theta_{-i} \times A_{-i})$ is \emph{consistent with monotonicity} iff
 \begin{equation}\label{cm}
 \tag{\textbf{CM}}
 \mu^{i} = \prod_{j \neq i} \mu^{i}_{j} \text{ with } \mu_{j}^{i} \in \Delta(\Theta_{j} \times A_{j}) \text{ and } \text{marg}_{\Theta_{j}}\mu^{i}_{j} \in M_{j} \text{ for each } j \neq i
 \end{equation}
 As with condition (\ref{equ1}) in Section~\ref{deltapro}, (\ref{cm}) has two parts: the independence assumption and the consistency of each $\mu^i_j$'s marginal distribution on $\Theta_j$ with monotonicity. We use $\Delta^{M}_{i}$ to denote the set of all  of $i$'s beliefs satisfying (\ref{cm}), and we let $\Delta^{M} = (\Delta^{M}_{j})_{j \in I}$.\footnote{Actually, we can assume the transparency of beliefs in i.i.d. instead of the more general monotonicity assumption. Because Proposition~\ref{suppo}, which provides a critical step for showing the main result in this section (Theorem~\ref{state22}), can be proved even if we assume i.i.d. (yet Proposition~\ref{suppo} does not hold in general if we assume i.i.d. \emph{with some special functional form}, for example, extreme value distributions). Here, we adopt the assumption of monotonicity since we want to relate our results to the monotonicity axiom in the definition of the regular quantal response function, which plays a central role in Goeree et al. \cite{getal19}.} Here, the superscript $M$ indicates the crucial role of monotonicity in the restriction.\footnote{\label{second} We formulate $\Delta^{M}$ as a restriction on first-order beliefs. It can alternatively be formulated as one on second-order beliefs. We opt for the former formulation because it addresses the property of $p$, which aligns with the QRE literature, for example, assuming $p$ to be i.i.d. Also, in the literature on EGT, most works center on first-order restrictions, although some---for example, Perea \cite{pa2011}, Friedenberg \cite{fa19}, and Battigalli and Catonini \cite{bc23}---investigate conditions on higher-order beliefs.}
 
 Consistency with monotonicity looks similar to a condition called ``respect of the opponents' preferences'' in the literature on  proper rationalizability, which studies the epistemic foundation of Myerson's \cite{mr78} proper equilibrium (Schuhmacher \cite{sf99}, Asheim \cite{ag01}, Perea \cite{pa2011}).\footnote{See section~6 of Perea \cite{pa2012} for a survey.} The distinction between the two conditions lies in the role of monotonicity. In the literature on proper rationalizability, monotonicity is used as an instrumental concept to construct a ``(regulated) trembling sequence'' and the focus is on observing the outcomes when the sequence approaches the limit; there, beliefs are formulated on lexicographic probabilities, which draw upon nonstandard analysis (Blume, Brandenburger, and Dekel \cite{bbd91}, \cite{bbd91b}). In contrast, our condition pertains exclusively to the macrolevel property that ``better''  actions are employed more frequently within the framework of standard probability theory. Our primary concern lies in the behavioral consequences (with respect to rankings) of this basic condition.
 \medskip

 As in Section~\ref{deltapro}, we assume rationality (R), common belief in rationality (CBR), and transparency of $\Delta^{M}$  (TCM). The $\Delta^{M}$-rationalization procedure can be defined by replacing $\Delta^{p}$ in Definition~\ref{def2} by $\Delta^{M}$. The question is whether, in the absence of precise information about the (latent) distribution $p$, with only  monotonicity as a commonly believed macrolevel property, it remains feasible to derive any macrolevel characteristics  of the $\Delta^{M}$-rationalizable outcomes comparable with RDCE. As a parallel to the $\Delta^{p}$-rationalizable action distribution in Definition~\ref{norac}, we define a $\Delta^{M}$-rationalizable ranking.
\begin{definition}
We call a complete preorder $\leq_{i}$ an \textbf{$n$-$\Delta^{M}$-rationalizable ranking} iff there is $\mu_{i} \in \Delta(\Theta_{i} \times A_{i})$ with $\mu_{i}(\Sigma_{i,\Delta^{M}}^{n})=1$ and marg$_{\Theta_{i}}\mu_{i} \in M_{i}$ such that $\leq_{i}$ is a ranking of marg$_{A_{i}} \mu_{i}$. The ranking $\leq_{i}$ is called \textbf{$\Delta^{M}$-rationalizable} iff it is $n$-$\Delta^{M}$-rationalizable for each $n \in \mathbb{N}$.
\end{definition}

Before comparing RDCE and $\Delta^{M}$-rationalizable rankings, we use two examples to show heuristically how $\Delta^{M}$-rationalization procedure proceeds.

\subsection{Examples}\label{sec:iete}

\begin{example}\label{exame:count}

\begin{table}[ht]
\centering
 \begin{tabular}{c c c c}

 \hline
$1\setminus 2$ & $d$ & $e$ & $f$ \\ 
 \hline
$a$ & 0,0 & 0,1  & 3,1  \\ 

 $b$ & 1,1 & 1,0 & 1,1  \\ 
 
 $c$ & 2,1 & 2,1 & 2,0  \\
 \hline
\end{tabular}
\caption{A $3 \times 3$ game}
\label{game1}
\end{table}

Consider the game $G$ in Table~\ref{game1} (Perea \cite{pa2011}). One can see that $RDR(G) = \{\pi_{1} \in \Delta(A_{1}): \pi_{1}(c) > \pi_{1}(b) > \pi_{1}(a)\}  
\times \{\pi_{2} \in \Delta(A_{2}): \pi_{2}(d) > \pi_{2}(e) > \pi_{2}(f) \}$. Following the graphical representation introduced in Goeree et al. \cite{getal19}, the RDCE of this game is depicted by the orange area in Figure~\ref{fig:H1rt}.

\begin{figure}[ht] 
\centering
  \includegraphics[width=0.9\columnwidth]{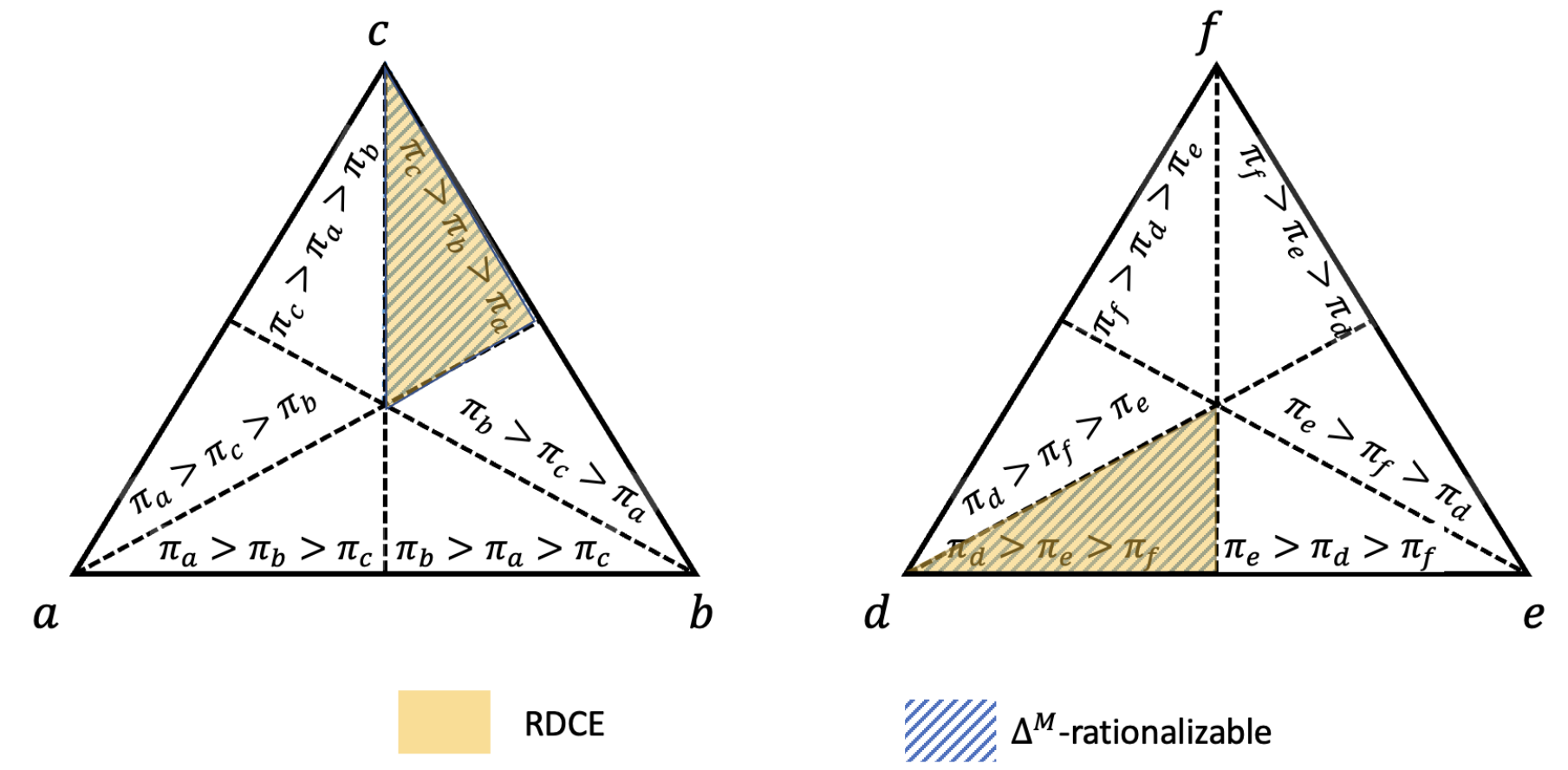}
  \caption{The R-rationalizability and RDCE equilibrium of $G$ in Example~\ref{exame:count}}
 \label{fig:H1rt}
\end{figure}

We now elucidate how the $\Delta^{M}$-rationalization procedure proceeds. At the beginning, players can believe anything. At \textbf{step 1}, since, for player 1, $b$ is dominated by $c$ (that is, $c$ generates a higher payoff than $b$ does for player 1 regardless of player 1's belief), any ranking incompatible with $c>_{1}b$ is not $1$-$\Delta^{M}$-rationalizable. At \textbf{step 2}, since (an agent from) player (population) 2 believes in rationality and her belief is in $\Delta^{M}$ for each of her beliefs $\mu^{2}$, marg$_{A_{1}}\mu^{2}(c) >$ marg$_{A_{1}}\mu^{2}(b)$, and, consequently, for player 2, $e$ generates a higher payoff than $f$, which means that any ranking incompatible with $e >_{2} f$ is not $2$-$\Delta^{M}$-rationalizable.

Drawing upon this, at \textbf{step 3}, for each belief $\mu^{1}$ of player 1, marg$_{A_{2}}\mu^{1}(e) >$ marg$_{A_{2}}\mu^{1}(f)$. Such a condition eliminates the possibility that the payoff generated from $a$ is as good as or better than that from $c$; indeed, if $u_{1}(a, $ marg$_{A_{2}}\mu^{1}) \geq u_{1}(c,$ marg$_{A_{2}}\mu^{1})$ holds, then marg$_{A_{2}}\mu^{1}(f) > \frac{2}{3}$, which is impossible under the condition marg$_{A_{2}}\mu^{1}(e) >$ marg$_{A_{2}}\mu^{1}(f)$. Therefore, any ranking compatible with $a\geq_{1}c$ is not $3$-$\Delta^{M}$-rationalizable.

Taking this into account, at \textbf{step 4}, player 2 cannot have any belief $\mu^{2}$ satisfying marg$_{A_{1}}\mu^{2}(a) \geq$ marg$_{A_{1}}\mu^{2}(c)$. More specifically, among the remaining possibilities, she has to eliminate every belief $\mu^{2}$ satisfying marg$_{A_{1}}\mu^{2}(a) \geq$ marg$_{A_{1}}\mu^{2}(c)>$ marg$_{A_{1}}\mu^{2}(b)$ (because all other possibilities compatible with marg$_{A_{1}}\mu^{2}(a) \geq$ marg$_{A_{1}}\mu^{2}(c)$ have been eliminated before). This makes $e >_{2} f \geq_{2} d$ not $4$-$\Delta^{M}$-rationalizable. 

Therefore, $c >_{1} a \geq_{1} b$ is not $5$-$\Delta^{M}$-rationalizable at \textbf{step 5} and  $e \geq_{2} d  >_{2} f$ is not $6$-$\Delta^{M}$-rationalizable at \textbf{step 6}. The procedure ends here. The only $\Delta^{M}$-rationalizable ranking is $c >_{1} b >_{2} a$ for 1 and $d >_{2} e >_{2} f$ for 2. The process is illustrated in Figure~\ref{fig:Elim1}, in which each ranking is located in the area in $\Delta(A_{i})$ corresponding to it; the number within each circle represents the step at which the ranking is eliminated.\footnote{Since the domain is clear, to simplify symbols, we omit the subscripts of the rankings in the figure.}  The $\Delta^{M}$-rationalizable rankings correspond to the dashed areas in Figure~\ref{fig:H1rt}. One can see that $RDR(G)$ is contained in (actually coincides with) the areas of the $\Delta^{M}$-rationalizable rankings. $\blacktriangle$

\begin{figure}[ht] 
\centering
  \includegraphics[width=0.9\columnwidth]{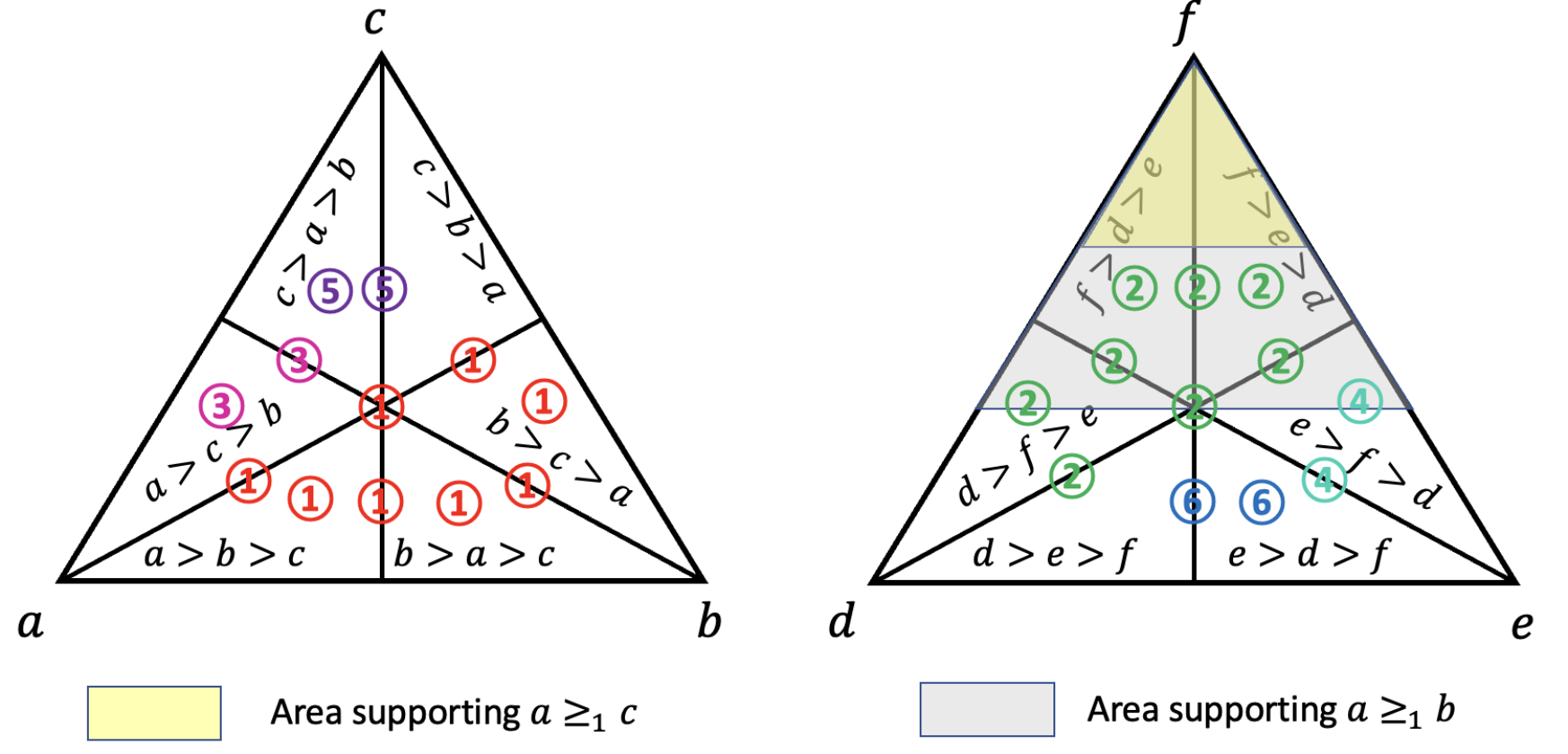}
  \caption{The elimination process to obtain $\Delta^{M}$-rationalizable rankings in Example~\ref{exame:count}}
 \label{fig:Elim1}
\end{figure}

\end{example}

\begin{example}\label{exam:liber}

\begin{table}[ht]
\centering
 \begin{tabular}{c c c c}

 \hline
$1\setminus 2$ & $b_{1}$ & $b_{2}$ & $b_{3}$ \\ 
 \hline
$a_{1}$ & $0, 0$ &  $15, -15$  & $-2,2$ \\ 

 $a_{2}$ & $-15, 15$ &  $0,0$ & $-1,1$ \\ 
 
 $a_{3}$ & $2,-2$ & $1,-1$ & $0,0$ \\
\hline
\end{tabular}
\caption{Liebermann's \cite{l60} $3 \times 3$ game}
\label{TABlibe}
\end{table}

Consider the game $G$ in Table~\ref{TABlibe} (Liebermann \cite{l60}). Goeree et al. \cite{getal19} calculated the $RDR(G)$ and $RDCE(G)$ of this game, and the latter is depicted by the orange area in  Figure~\ref{fig:Liberman} (1).\footnote{RDR here is essentially the interior of RDCE. Figure~\ref{fig:Liberman} only shows the results for player 1.  Since the game is symmetric, the results for player 2 can be obtained by replacing $a_{k}$ by $b_{k}$, $k = 1,2,3$. As shown in Goeree et al. \cite{getal19}, the curve in Figure~\ref{fig:Liberman} (1) running from the center to $a_{3}$ (the unique Nash equilibrium) depicts the logit equilibria with respect to the parameter running from $0$ to $\infty$.} By applying the argument in Example~\ref{exame:count}, one can see that the $\Delta^{M}$-rationalization process terminates at step 1: since $a_{2}$ ($b_{2}$) is dominated by $a_{3}$ ($b_{3}$), all rankings compatible with $a_{2} \geq a_{3}$ ($b_{2} \geq b_{3}$) are eliminated, and all other rankings are $\Delta^{M}$-rationalizable. Still, one can see that $RDR(G)$ is contained in the areas of the $\Delta^{M}$-rationalizable rankings. $\blacktriangle$

\begin{figure}[ht] 
\centering
  \includegraphics[width=0.9\columnwidth]{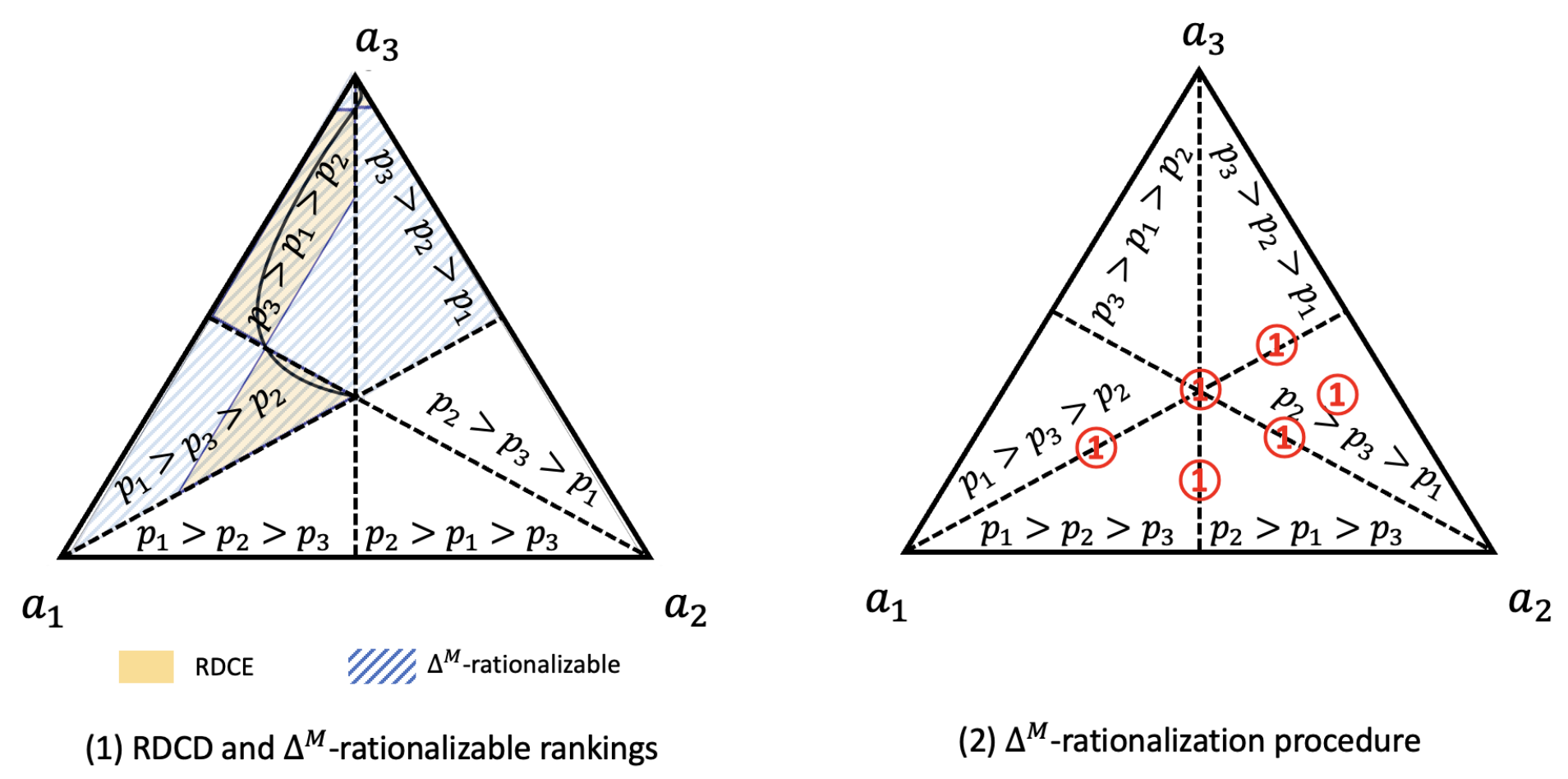}
  \caption{RDCE, $\Delta^{M}$-rationalizable rankings, and the elimination in Example~\ref{exam:liber}}
 \label{fig:Liberman}
\end{figure}
\end{example}

\subsection{Iterated elimination of rankings}\label{sec:ier}
\noindent In this subsection, for simplicity, we focus on two-person games; the definitions and results can be directly generalized. Fix a game $G = \langle A_{1}, A_{2}; u_{1}, u_{2}\rangle$. For each $i \in \{1,2\}$, we use $\Lambda_{i}$ to denote the set of complete preorders on $A_{i}$. For each $\lambda_{i} \in \Lambda_{i}$, we define $\Area(\lambda_{i}) = \{\pi_{i}\in \Delta(A_{i}) :\lambda_{i} \text{ is the ranking of }\pi_{i}\}$; for a subset $E_{i} \subseteq \Lambda_{i}$, we let $\Area(E_{i}) = \cup_{\lambda_{i} \in E_{i}}\Area(\lambda_{i})$. One can see that the domain of $\Area(\cdot)$ can be straightforwardly extended to include all preorders on each $A_{i}$. For example, given $A_{1} = \{a,b,c\}$ and the preorder $a\geq_{1} b$, $\Area(a\geq_{1} b) := \{\pi_{1} \in A_{1}: \pi_{1}(a) \geq \pi_{1}(b)\}$. A ranking $\lambda_{i}$ is \emph{compatible with} a preorder $\rho_{i}$ iff $\Area(\lambda_{i}) \cap \Area(\rho_{i}) \neq \emptyset$. Also, we define $\Cond(\lambda_{i}) = \{\pi_{j} \in \Delta(A_{j}): \lambda_{i} \text{ is the ranking of } u_{i}(\pi_{j})\}$; in words, $\Cond(\lambda_{i})$ is the set of the opponent $j$'s action distributions for which the ranking of the vector of payoffs generated by $i$'s actions has ranking $\lambda_{i}$. For example, for $\lambda_{i}:=a >_{i} b >_{i} c$, $\Area(\lambda_{i}) = \{\pi_{i} \in \Delta(\{a,b,c\}): \pi_{i}(a) > \pi_{i}(b) > \pi_{i}(c)\}$ and $\Cond(\lambda_{i}) = \{\pi_{j} \in  \Delta(A_{j}): u_{i}(a, \pi_{j}) > u_{i}(b, \pi_{j}) >u_{i}(c, \pi_{j}) \}$.

\begin{definition}
The procedure of \textbf{iterated elimination of rankings (IER)} is defined as follows:

\begin{itemize}

\item \textbf{Step 0}. $\Psi_{i, \Delta^{M}}^{0} = \Lambda_{i}$ for each $i \in I$.

\item \textbf{Step 1}. For each $i \in I$ and each $a_{i}, a_{i}^{\prime} \in A_{i}$, if $a_{i}$ dominates $a_{i}^{\prime}$, eliminate all rankings in $\Psi_{i}^{0}$ in which $a_{i} \leq_{i} a_{i}^{\prime}$; if $a_{i}$ weakly dominates $a_{i}^{\prime}$, eliminate all rankings in $\Psi_{i}^{0}$ in which  $a_{i} <_{i} a_{i}^{\prime}$. By doing this, we obtain $\Psi_{i, \Delta^{M}}^{1}$ for each $i \in I$.

\item \textbf{Step $n+1$}. Suppose that $\Psi_{j, \Delta^{M}}^{n}$ is defined for each $j \in I$. For each $\lambda_{i} \in \Psi_{i, \Delta^{M}}^{n}$, eliminate it if $\Cond(\lambda_{i}) \cap  \Area(\Psi_{j,\Delta^{M}}^{n}) = \emptyset$. By doing this, we obtain $\Psi_{i,\Delta^{M}}^{n+1}$ for each $i \in I$.

\end{itemize}
\end{definition}

We define $\Psi_{i,\Delta^{M}}^{*}(G)=\cap_{n=1}^{\infty} \Psi_{i,\Delta^{M}}^{n}$. It is straightforward to see that $\Psi_{i, \Delta^{M}}^{*}(G)$ is nonempty for each $i \in I$. 
Since each $\Lambda_{i}$ is finite and the procedure generates a non-increasing sequence, it actually stops within finite steps, that is, there is a finite $K$ such that $\Psi_{i,\Delta^{M}}^{K+1} = \Psi_{i,\Delta^{M}}^{K}$ for each $i \in I$. Therefore, the procedure can be described in a pseudocode in Algorithm~\ref{alg:ord}. 

\begin{algorithm}[ht!]
\caption{An algorithm for computing $(\Psi_{i,\Delta^{M}}^{*})_{i \in I, a_{i} \in A_{i}}$}\label{alg:ord}
\begin{algorithmic}
\State \textbf{start:} $\Psi_{i,\Delta^{M}}^{o} =\Delta_{i}^{1} = \Lambda_{i}$ for each $i \in I$ 
\State \textbf{remove } $\lambda_{i} \in \Delta_{i}^{1}$ \textbf{ if } $a_{i} \dom a_{i}^{\prime}$ \textbf{ and } $\lambda_{i}$ is compatible with $a_{i} \leq a_{i}^{\prime}$
\State \textbf{remove } $\lambda_{i} \in \Delta_{i}^{1}$ \textbf{ if } $a_{i} \wdom a_{i}^{\prime}$ \textbf{ and } $\lambda_{i}$ is compatible with $a_{i} <a_{i}^{\prime}$
\State Set $\Psi_{i,\Delta^{M}}^{1} = \Delta_{i}^{1}$
\State Set $n = 1$
\While{$\Psi_{\iota,\Delta^{M}}^{n} \neq \Psi_{\iota, \Delta^{M}}^{n-1}$ for some $\iota \in I$}
  \State Set $\Delta_{i}^{n+1} = \Psi_{i,\Delta^{M}}^{n}$
  \State \textbf{remove} $\lambda_{i} \in \Delta_{i}^{n+1}$ \textbf{ if } $\Cond(\lambda_{i}) \cap  \Area(\Psi_{j,\Delta^{M}}^{n}) = \emptyset$
  \State Set $\Psi_{i,\Delta^{M}}^{n+1}=\Delta_{i}^{n+1}$
  \State Set $n = n+1$
\EndWhile
\State Set $\Psi_{i,\Delta^{M}}^{*} = \Psi_{i,\Delta^{M}}^{n}$
\end{algorithmic}
\end{algorithm}

IER draws inspiration from Perea's \cite{pa2011} \emph{iterated addition of preference restrictions} (IAPR). However, the primary objective of IAPR is to explore proper-rationalizable \emph{actions} rather than rankings. In general, IAPR generates preorders, and the preorders serve an instrumental role to construct lexicographic beliefs and to find best responses to those lexicographic beliefs. Therefore, the application and interpretation of the rankings here differ from IAPR significantly.

The following statement shows that IER characterizes $\Delta^{M}$-rationalizability.

\begin{proposition}\label{suppo}
For each $i \in \{1,2\}$ and $n \in \mathbb{N}$, a ranking $\lambda_{i}$ is $n$-$\Delta^{M}$-rationalizable if and only if $\lambda_{i} \in \Psi_{i,\Delta^{M}}^{n}$. Consequently, $\lambda_{i}$ is $\Delta^{M}$-rationalizable if and only if $\lambda_{i} \in \Psi_{i,\Delta^{M}}^{*}$.
\end{proposition}

From Proposition~\ref{suppo}, we have the following result, which states that RDCE is essentially included in $\Delta^{M}$-rationalizable outcomes.

\begin{theorem}\label{state22}
$RDR(G) \subseteq  \Area(\Psi^{*}_{1, \Delta^{M}}(G)) \times \Area(\Psi^{*}_{2, \Delta^{M}}(G))$
\end{theorem}

\begin{corollary}
$RDCE(G) \subseteq cl(\Area(\Psi^{*}_{1, \Delta^{M}}(G)) \times \Area(\Psi^{*}_{2, \Delta^{M}}(G)))$
\end{corollary}

One may wonder whether the relation between RDCE and $\Delta^{M}$-rationalizability can be further refined. For example, from the two examples in Section~\ref{sec:iete}, one may conjecture that $\Delta^{M}$-rationalizable rankings are the contours  of RDR, that is, a ranking on $A_{i}$ is $\Delta^{M}$-rationalizable if and only if it is the ranking of some $\pi_{i} \in$ Proj$_{i}RDR(G)$. However, this statement does not hold in general. For example, consider the $2 \times 2$ Matching Pennies game studied in Goeree et al. \cite{getal19} (Table~13.4). For each player, every ranking is $\Delta^{M}$-rationalizable, while for the column player, $H > C$ does not correspond to any mixed action in Proj$_{\text{column}}RDR$. 

Compared to $\Delta^{p}$-rationalizability, the requirements imposed by $\Delta^{M}$-rationalizability are significantly relaxed. First, $\Delta^{M}$-rationalizability does not need common belief in $p$. Instead, the sole requisite is common belief in monotonicity, a condition that is reasonable and easier to satisfy. Second, since the set $\Lambda_{i}$ of rankings is finite, the procedure of IER terminates within finite steps, which avoids the  cognitively formidable challenge of computing an asymptotic sequence of minimums for each action as in the $\Delta^{p}$-rationalization procedure.

It is easy to find some trivial conditions for the (essential) coincidence of RDCE and $\Delta^{M}$-rationalizability. For example, the coincidence holds if for each player $i$ and each pair $a_{i},a_{i}^{\prime}$, either $a_{i}$ dominates $a_{i}^{\prime}$ or $a_{i}^{\prime}$ dominates $a_{i}$. Yet 
identifying a non-trivial condition as a counterpart to Theorem~2 appears to be a challenging task because $\Delta^{M}$-rationalizability is built upon a macrolevel  qualitative condition. We anticipate that  further research will  be conducted in this direction in the future.

Goeree et al. \cite{getal19} show that for finite games, RDCE essentially coincides with the set of all QREs  with different regular quantal response functions. One may wonder whether a similar relation holds between $\Delta^{p}$- and $\Delta^{M}$-rationalizability, that is, whether for each $i \in I$, the union of all $\Delta^{p}$-rationalizable action distributions (essentially) coincides with the areas of $\Delta^{M}$-rationalizable rankings for $i$. If we restrict each $p_{i}$ to monotonic distributions, then it is easy to see that for each $i$, each $\Delta^{p}$-rationalizable action distribution is within the area of some $\Delta^{M}$-rationalizable ranking of $i$. Here, monotonicity is leveraged at each step of the $\Delta^{p}$-rationalization procedure and plays a similar role as in the procedure of IER. However, the other way might not hold in general, because a best-response chain of action distributions in the areas of $\Delta^{M}$-rationalizable rankings may correspond to different distributions on $\Theta$ and those distributions might not be reconcilable.

\section{Concluding remarks}\label{concluding}

\noindent This paper investigated the epistemic foundations of QRE and its variant RDCE in static games with the methodology of $\Delta$-rationalizability. In addition to the canonical assumptions of rationality and common belief in rationality, by imposing  transparency of the distributions of idiosyncratic payoff shocks, we obtained a solution concept called $\Delta^{p}$-rationalizability, which includes action distributions generated from QRE. We also presented a condition of payoff structures under which QRE coincides with or provides tight information about the behavioral implications of the epistemic conditions. Then, to leverage the assumptions about $p$ in the literature and to relax the stringent epistemic condition of transparency of $\Delta^{p}$, we studied RDCE, a parameter-free variant of QRE, and showed that transparency of monotonicity of the distribution of the types provides an epistemic condition for RDCE. We also compared our two solution concepts and observed that, in contrast to the relationship between QRE and RDCE, the area of $\Delta^{p}$-rationalizable action distributions (for all monotonic $p$) forms a subset of the area of $\Delta^{M}$-rationalizable rankings, while the other way around does not necessarily hold.

Our concepts and results have the potential to offer valuable insights for interpreting data. Many studies adopt an outsider's perspective and interpret $p$ as the distribution of errors. By specifying the distribution to adhere to a particular functional form, such as the extreme value distribution, researchers employ statistical methodologies, such as maximum likelihood estimation, to demonstrate that experimental data can be elucidated as the behavior of errors approaching zero. Our results suggest that in some cases, we can regard $p$ as true and publicly believed distributions of some intrinsic attributes.\footnote{For instance, consider vaccination across diverse population groups. Statistical data pertaining to vulnerability, such as data on the percentage of individuals within specific groups having preexisting medical conditions, are frequently accessible and publicly available in various communities. Furthermore, individuals participating in experiments might hold certain beliefs that align with widely recognized statistical tendencies, for instance, the academic performance of students within a university. These beliefs can significantly affect their decision-making process, whether consciously or subconsciously.} The observed behavior could be interpreted as a procedure of iteratively eliminating unreasonable outcomes and reestimating each choice according to one's observation of others' behavior based on a transparent $p$.

Also, we may assume that some macrolevel property of $p$ instead of the accurate functional form of $p$ is commonly believed, which is realistic in many cases. In this vein, $\Delta^{M}$-rationalizability could be regarded as providing a reasonable benchmark for explaining data. For example, one can see that all outcomes in Liebermann \cite{l60} fall into the area of $\Delta^{M}$-rationalizable rankings. However,  the outcomes do not fall ``evenly'' into the three rankings' areas in Figure~\ref{fig:Liberman} (1): in Liebermann's \cite{l60} data, no behavior with ranking $p_{1} > p_{3} > p_{2}$ was recorded, even though the ranking is both $\Delta^{M}$-rationalizable and compatible with RDCE. The epistemic condition presented in this paper may necessitate further refinement, or  we may need to integrate our results with other established theories, such as dynamic learning, in order to provide a more comprehensive interpretation of the data.

\appendix
\addcontentsline{toc}{section}{Appendices}
\section*{Appendices}
\section{Proofs}\label{sec:proof}

\begin{proof}[Proof of Proposition~\ref{Thmai1}]
Fix $i \in I$. We have to construct a $\mu_{i} \in \Delta(\Theta_{i} \times A_{i})$ with $\mu_{i}(\Sigma_{i,\Delta^{p}}^{n})=1$ for each $n \in \mathbb{N}$ and marg$_{\Theta_{i}}\mu_{i} = p_{i}$ such that $\pi_{i} =$ marg$_{A_{i}}\mu_{i}$. 

For each $j \in I$, let $<_{j}$ be an asymmetric linear order on $A_{j}$; hence for each nonempty $A_{j}^{\prime} \subseteq A_{j}$, $\max A_{j}^{\prime}$ is well defined. Recall that for each $a_{j} \in A_{j}$, $E_{j,a_{j}}(\pi)$ is the set of $\theta_{j}$'s under which $a_{j}$ is a best response to $\pi_{-j}$. Now for each $\theta_{j} \in \Theta_{j}$, we define $A_{j, \theta_{j}}(\pi) =\{a_{j} \in A_{j}: \theta_{j} \in E_{j,a_{j}}(\pi)\}$, that is, the set of best responses of $j$ against $\pi_{-j}$ under $\theta_{j}$. It is clear that $A_{j, \theta_{j}}(\pi) \neq \emptyset$ for each $\theta_{j}$. Define a mapping $c_{j}:\Theta_{j} \rightarrow A_{j}$ such that $c_{j}(\theta_{j}) = \max A_{j, \theta_{j}}(\pi)$. Note that for a ``boundary'' $\theta_{j}$, that is, $|A_{j, \theta_{j}}(\pi)|>1$, $c_{j}(\theta_{j})$ is just a tie breaker; this does not cause any problem since $\pi_{j}$ is a probability distribution and those ``boundary'' payoff types form a null set with respect to $p_{j}$. 

It can be seen that $c_{j}$ is measurable, because for each $a_{j} \in A_{j}$, $c_{j}^{-1}(a_{j})$ is either $E_{j,a_{j}}$ or $E_{j,a_{j}} \setminus \cap_{a_{j}^{\prime} \in A_{j}^{\prime}}E_{j,a_{j}^{\prime}}$ for some $A_{j}^{\prime} \subseteq A_{j}$, both measurable. Let Gr$(c_{j}) = \{(\theta_{j},c_{j}(\theta_{j})): \theta_{j} \in \Theta_{j}\}$, that is, the graph of $c_{j}$. Now we define $\mu_{j} \in \Delta(\Theta_{j} \times A_{j})$. For each $X \subseteq \Theta_{j} \times A_{j}$, we say $X$ is measurable iff Proj$_{\Theta_{j}}[X \cap$ Gr$(c_{j})]$ is measurable in $\Theta_{j}$. For each measurable $X$, define $\mu_{j}(X) = p_{j}($Proj$_{\Theta_{j}}[X \cap$ Gr$(c_{j})])$. It is straightforward to see that marg$_{\Theta_{j}}\mu_{j} = p_{j}$ and for each $a_{j} \in A_{j}$, marg$_{A_{j}}\mu_{j}(a_{j}) = p_{j}[E_{j,a_{j}}(\pi)] = \pi_{j}(a_{j})$.

Let $\mu^{i} = \prod_{j \neq i}\mu_{j}$. Since $\pi$ is a QRE, it is clear that under each $\theta_{i}$, $c_{i}(\theta_{i})$ is a best response to $\mu^{i}$. This fixed-point property guarantees that $(\theta_{i},c_{i}(\theta_{i})) \in \Sigma_{i,\Delta^{p}}^{n}$ for each $n \in \mathbb{N}$. Therefore, $\pi_{i} =$ marg$_{A_{i}}\mu_{i}$ is $\Delta^{p}$-rationalizable.
\end{proof}

\begin{proof}[Proof of Proposition~\ref{prop2}]
It is straightforward to see that $q_{i,a_{i}}^{n}$ is a lower bound. Suppose that for some $q^{o}_{i,a_{i}} >q_{i,a_{i}}^{n}$, $p_{i}(a_{i}) \geq q_{i,a_{i}}^{o}$ holds for each $n$-$\Delta^{p}$-rationalizable distribution $p_{i}$. It implies that some $(\theta_{i},a_{i})$ not included in $E_{i,a_{i}}^{n}$ (see Algorithm~\ref{alg:cap}) is also $n$-$\Delta^{p}$-rationalizable, that is, at least one inequality in the definition of $E_{i,a_{i}}^{n}$ is violated. Yet this means that there is a belief allowable at step $n-1$ which does not support $a_{i}$ to be a best response under $\theta_{i}$. In this manner, we can find some $n$-$\Delta^{p}$-rationalizable distribution $p_{i}$ with $p_{i}(a_{i}) < q_{i,a_{i}}^{o}$, a contradiction.
\end{proof}

\begin{proof}[Proof of Proposition~\ref{PropDec}]
Fix $i \in I$ and $a_{i} \in A_{i}$. First, it is clear that for each $j \in I$ and $a_{j} \in A_{j}$, $q_{j,a_{j}}^{0} = 0$. Consider step $n \geq 0$ in the $\Delta^{p}$-rationalization procedure. Suppose that each $(q_{j,a_{j}}^{k})_{k=0}^{n}$ is nondecreasing. At step $n+1$, $q_{i,a_{i}}^{n+1}$ is determined by the set of $\theta_{i} \in \Theta_{i}$ for which only $(\theta_{i},a_{i})$ is $\Delta^{p}$-rationalizable, that is, the set of $\theta_{i}$ such that for each $a_{i}^{\prime} \in A_{i}$ and each $n$-th $\Delta^{p}$-rationalizable $q_{-i}$,  
\begin{equation}\label{optmax}
\theta_{i,a_{i}} - \theta_{i,a_{i}^{\prime}}  \geq  \sum_{a_{-i} \in A_{-i}} q_{-i}(a_{-i})[u_{i}(a_{i}^{\prime},a_{-i}) -u_{i}(a_{i},a_{-i})]
\end{equation}
The feasible region is determined by $q_{j,a_{j}}^{n} \leq q_{j}(a_{j}) \leq 1$ ($a_{j} \in A_{j}$) and $\sum_{a_{j}^{\prime}}q_{j}(a_{j}^{\prime})=1$. By inductive assumption, $q_{j,a_{j}}^{n-1} \leq q_{j,a_{j}}^{n}$ for each $j \neq i$ and $a_{j} \in A_{j}$. Hence the feasible region shrinks, which implies that $q_{i,a_{i}}^{n} \leq q_{i,a_{i}}^{n+1}$. Since $(q_{i,a_{i}}^{n})_{n \in \mathbb{N}}$ is nondecreasing and bounded within $[0,1]$, it converges to some  $q_{i,a_{i}}^{*}$. Since Theorem~\ref{Thmai1} shows that $\underline{\pi}_{i,a_{i}}$ is $\Delta^{p}$-rationalizable, it follows that $q_{i,a_{i}}^{*} \leq \underline{\pi}_{i,a_{i}}$.
\end{proof}

\begin{proof}[Proof of Theorem~\ref{Thmai2}]
Before showing Theorem~\ref{Thmai2}, we need a lemma.

\begin{lemma}\label{parti}
Consider $i \in \{1,2\}$ and distinct $a_{i},a_{i}^{\prime} \in A_{i}$. Either  $\phi_{i}(a_{i},a_{i}^{\prime}) =\phi_{i}(a_{i}^{\prime},a_{i}) =\emptyset$ or $\phi_{i}(a_{i},a_{i}^{\prime}) \cup\phi_{i}(a_{i}^{\prime},a_{i}) =A_{-i} $.
\end{lemma}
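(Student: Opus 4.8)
The plan is to reduce both sets to statements about a single real-valued function on $A_{-i}$ and then argue by a short dichotomy. First I would fix $i$ and the distinct pair $a_{i}, a_{i}^{\prime}$ and introduce the auxiliary function $g:A_{-i} \rightarrow \mathbb{R}$ defined by $g(a_{-i}) = u_{i}(a_{i}^{\prime},a_{-i}) - u_{i}(a_{i},a_{-i})$. By definition $\overline{H}_{i}^{a_{i},a_{i}^{\prime}} = \max_{a_{-i}} g(a_{-i})$, so that $\phi_{i}(a_{i},a_{i}^{\prime})$ is exactly the set of $a_{-i}$ at which $g$ does \emph{not} attain its maximum. Swapping the roles of $a_{i}$ and $a_{i}^{\prime}$ replaces $g$ by $-g$, and the identity $\overline{H}_{i}^{a_{i}^{\prime},a_{i}} = -\underline{H}_{i}^{a_{i},a_{i}^{\prime}}$ recorded just before the lemma gives $\overline{H}_{i}^{a_{i}^{\prime},a_{i}} = \max_{a_{-i}}(-g(a_{-i})) = -\min_{a_{-i}} g(a_{-i})$; hence $\phi_{i}(a_{i}^{\prime},a_{i})$ is exactly the set of $a_{-i}$ at which $g$ does \emph{not} attain its minimum.

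With this reformulation the claim becomes a statement purely about the finitely-supported function $g$, and I would split into two cases. If $g$ is constant, every $a_{-i}$ attains both the maximum and the minimum of $g$, so $\phi_{i}(a_{i},a_{i}^{\prime}) = \phi_{i}(a_{i}^{\prime},a_{i}) = \emptyset$, which is the first alternative. If $g$ is not constant, then $\max_{a_{-i}} g > \min_{a_{-i}} g$, and I would establish that the union is all of $A_{-i}$ by taking an arbitrary $a_{-i}$: either $g(a_{-i}) < \max_{a_{-i}} g$, placing it in $\phi_{i}(a_{i},a_{i}^{\prime})$, or $g(a_{-i}) = \max_{a_{-i}} g$, in which case non-constancy forces $g(a_{-i}) = \max_{a_{-i}} g > \min_{a_{-i}} g$, placing it in $\phi_{i}(a_{i}^{\prime},a_{i})$. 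Either way $a_{-i}$ lies in the union, so $\phi_{i}(a_{i},a_{i}^{\prime}) \cup \phi_{i}(a_{i}^{\prime},a_{i}) = A_{-i}$.

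There is no serious obstacle here; the only point requiring care is the interaction of the \emph{strict} inequality in the definition of $\phi_{i}$ with the boundary (tie) case, since that is exactly what separates the two alternatives. The whole dichotomy is driven by whether $\max_{a_{-i}} g = \min_{a_{-i}} g$, and the bookkeeping must keep the inequalities strict so that an element attaining the maximum is correctly excluded from $\phi_{i}(a_{i},a_{i}^{\prime})$ while, under non-constancy, being automatically included in $\phi_{i}(a_{i}^{\prime},a_{i})$. I note in passing that the argument never uses $|A_{-i}| = 2$, so the statement in fact holds for any finite $A_{-i}$; I would nonetheless state it in the form needed for Theorem \ref{Thmai2}.
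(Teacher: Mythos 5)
Your proof is correct and follows essentially the same route as the paper's: the paper likewise splits on whether the payoff difference $u_{i}(a_{i}^{\prime},\cdot)-u_{i}(a_{i},\cdot)$ is constant (equivalently, whether $\phi_{i}(a_{i},a_{i}^{\prime})=\emptyset$), and in the non-constant case shows $A_{-i}\setminus\phi_{i}(a_{i},a_{i}^{\prime})\subseteq\phi_{i}(a_{i}^{\prime},a_{i})$, which is exactly your observation that a maximizer of $g$ cannot be a minimizer when $\max g>\min g$. Your explicit introduction of $g$ and the max/min reformulation is only a notational repackaging of the same argument.
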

\begin{proof}
If $\phi_{i}(a_{i},a_{i}^{\prime}) =\emptyset$, then $u_{i}(a_{i}^{\prime},a_{-i})-u_{i}(a_{i},a_{-i}) = \overline{H}_{i}(a_{i},a_{i}^{\prime})$ for each $a_{-i} \in A_{-i}$, which implies that $u_{i}(a_{i},a_{-i}) -u_{i}(a_{i}^{\prime},a_{-i})$ is constant on $A_{-i}$, that is, $\phi_{i}(a_{i}^{\prime},a_{i}) =\emptyset$. Suppose that $\phi_{i}(a_{i},a_{i}^{\prime}) \neq\emptyset$. Since $u_{i}(a_{i},a_{j})-u_{i}(a_{i}^{\prime},a_{j})=-(u_{i}(a_{i}^{\prime},a_{j})-u_{i}(a_{i},a_{j}))$, it follows that $(A_{j} \setminus \phi_{i}(a_{i},a_{i}^{\prime}) )\subseteq \phi_{i}(a_{i}^{\prime},a_{i})$, and consequently $\phi_{i}(a_{i},a_{i}^{\prime}) \cup\phi_{i}(a_{i}^{\prime},a_{i}) =A_{j}$.
\end{proof}

First, suppose that $a_{i}$ is non-serial, that is, $u_{i}(a_{i}^{\prime},a_{-i})-u_{i}(a_{i},a_{-i}) = \overline{H}_{i}(a_{i},a_{i}^{\prime})$ for each $a_{-i} \in A_{-i}$ ($a_{i}^{\prime} \neq a_{i}$). For each $n \geq 0$, define $E_{i,a_{i}}^{n+1}$ to be set set of all $\theta_{i} \in \Theta_{i}$ satisfying
\begin{equation}\label{iterE}
\theta_{i,a_{i}} - \theta_{i,a_{i}^{\prime}}  \geq \max_{q_{-i} \text{ is }n\text{-th } \Delta^{p}\text{-rationalizable}} \sum_{a_{-i} \in A_{-i}}q_{-i}(a_{-i})[u_{i}(a_{i}^{\prime},a_{-i}) - u_{i}(a_{i},a_{-i})]
\end{equation}
It follows that here, $E_{i,a_{i}}^{n+1} = E_{i,a_{i}}^{1}$ $=\{\theta_{i}: \theta_{i,a_{i}}-\theta_{i,a_{i}^{\prime}} \geq \overline{H}_{i}^{a_{i},a_{i}^{\prime}}\}$ for each $n \geq 0$, and consequently $q_{i,a_{i}}^{*} = p_{i}(E^{1}_{i,a_{i}})$. Further, since for each $q \in \Delta(A_{1}) \times \Delta(A_{2})$, $E_{i,a_{i}}(q_{j}) = E_{i,a_{i}}^{1}$, hence $q_{i,a_{i}}^{*} = p_{i}(E_{i,a_{i}}(q_{j}))$, by Proposition~\ref{PropDec}, $q_{i,a_{i}}^{*} =\underline{\pi}_{i,a_{i}}$. Similarly, if $a_{i}$ is eventually non-serial, since the probabilities of their marginal actions will be fixed from the first round, $E_{i,a_{i}}^{n} = E_{i,a_{i}}^{2}$ for each $n \geq 2$, and we still have $q_{i,a_{i}}^{*}=\underline{\pi}_{i,a_{i}}$.

Now suppose that $a_{i}$ satisfies C2 with $a_{j}$ being the unique element in $\tilde{\phi}_{i}(a_{i})$ ($j \neq i$). Let $A_{i}=\{a_{i}, a_{i}^{\prime}\}$ and $A_{j}=\{a_{j}, a_{j}^{\prime}\}$. Lemma \ref{parti} implies that $\tilde{\phi}_{i}(a_{i}^{\prime}) = a_{j}^{\prime}$ and $\tilde{\phi}_{j}(a_{j}^{\prime}) = a_{i}^{\prime}$. We show that the following $q$ is a QRE:
\begin{equation*}
q_{i}(a_{i}) = q_{i,a_{i}}^{*}, q_{i}(a_{i}^{\prime})= 1- q_{i,a_{i}}^{*}, q_{j}(a_{j}) = q_{j,a_{j}}^{*}, q_{i}(a_{i}^{\prime} )= 1- q_{j,a_{j}}^{*}
\end{equation*}
First, at the limit, 
\begin{equation}
\label{poi}
q_{i,a_{i}}^{*} = p_{i}(\{\theta_{i}\in \Theta_{i}: \theta_{i,a_{i}}-\theta_{i,a_{i}^{\prime}} \geq (1-q_{j,a_{j}}^{*})\overline{H}_{i}(a_{i},a_{i}^{\prime}) + q_{j,a_{j}}^{*}\underline{H}_{i}(a_{i},a_{i}^{\prime})\})
\end{equation}
Here, as defined in Example~\ref{exa:heu}, $\underline{H}_{i}(a_{i},a_{i}^{\prime})$ is the difference of payoffs generated from $a_{i}^{\prime}$ and $a_{i}$ which is smaller than $\overline{H}_{i}(a_{i},a_{i}^{\prime}) $. Hence, if $q^{*}_{i,a_{i}}$ is the probability that $i$ uses $a_{i}$ in some QRE, it requires that $j$ uses $a_{j}$ with $q_{j,a_{j}}^{*}$, since at the limit,
\begin{equation}
\label{oyi}
 q_{j,a_{j}}^{*} = p_{j}(\{\theta_{j} \in \Theta_{j}: \theta_{j,a_{j}}-\theta_{j,a_{j}^{\prime}} \leq q_{i,a_{i}}^{*}\overline{H}_{j}(a_{j},a_{j}^{\prime}) + (1-q_{i,a_{i}}^{*})\underline{H}_{j}(a_{j},a_{j}^{\prime})\}).
\end{equation}
Hence $q_{i,a_{i}}^{*}$ and $q_{j,a_{j}}^{*}$ are mutually fulfilling. In addition, when $q_{j}(a_{j}^{\prime})= 1-q_{j,a_{j}}^{*}$, each $\theta_{i} \in \Theta_{i} \setminus \{\theta_{i}\in \Theta_{i}: \theta_{i,a_{i}}-\theta_{i,a_{i}^{\prime}} \geq (1-q_{j,a_{j}}^{*})\overline{H}_{i}(a_{i},a_{i}^{\prime}) + q_{j,a_{j}}^{*}\underline{H}_{i}(a_{i},a_{i}^{\prime})\}$ $ = \{\theta_{i}\in \Theta_{i}: \theta_{i,a_{i}^{\prime}}-\theta_{i,a_{i}} > (1-q_{j,a_{j}}^{*})(u_{i}(a_{i},a_{j}^{\prime})-u_{i}(a_{i}^{\prime},a_{j}^{\prime}) )+ q_{j,a_{j}}^{*}(u_{i}(a_{i},a_{j})-u_{i}(a_{i}^{\prime},a_{j}))\}$ and the measure of the latter set is $1-q_{i,a_{i}}^{*}$, that is, $p_{i}(E_{i,a_{i}^{\prime}})$. The same calculation works for $a_{j}^{\prime}$. Hence, $q$ is a QRE.

Suppose that $|Q(G,\Theta, p)|>1$ and both C1 and C2 are violated for one action. It follows that for distinct $\pi, \pi^{\prime} \in Q(G,\Theta, p)$, each $j \in I$ and $a_{j} \in A_{j}$, $\pi_{j}(a_{j})\neq \pi_{j}^{\prime}(a_{j})$. Therefore, $\sum_{a_{j}^{\prime} \in A_{j}}\underline{\pi}_{j,a_{j}^{\prime}}<1$ for each $j \in I$. Suppose that $q_{i,a_{i}}^{*}=\underline{\pi}_{i,a_{i}}$ for some $i \in I$ and $a_{i} \in A_{i}$. Since now every pair in $A_{1} \cup A_{2}$ are reachable from the other, if $a_{i}$ is used in some QRE by probability $q_{i,a_{i}}^{*}$, it requires each action $a_{j}$ is used by probability $q_{j,a_{j}}^{*}$, which does not form a probability distribution. Hence $q_{i,a_{i}}^{*}<\underline{\pi}_{i,a_{i}}$ for each $i \in I$ and $a_{i} \in A_{i}$.
\end{proof}

\begin{proof}[Proof of Proposition~\ref{suppo}]
Fix $i \in I$ and $n \in\mathbb{N}$ with $n \geq 2$.\footnote{As we noted in footnote \ref{second}, though we formulated the epistemic condition as a first-order one, it can be formulated and actually works as a secod-order condition on beliefs since it is about others' behavior (based on their beliefs). Therefore, the core procedure started from the second step.} It is straightforward to see that each $\lambda_{i}$ eliminated in IER at step $n$ is not $n$-$\Delta^{M}$-rationalizable since there is no belief to support them. We have to show that every order $\lambda_{i} \in \Psi_{i, \Delta^{M}}^{n}$  is $n$-$\Delta^{M}$-rationalizable. First, since $\lambda_{i}$ is not eliminated at step $n$, $\Cond(\lambda_{i}) \cap \Area(\Psi_{j,\Delta^{M}}^{n-1}) \neq \emptyset$, and for each $\pi_{j} \in \Cond(\lambda_{i}) \cap  \Area(\Psi_{j,\Delta^{M}}^{n-1})$, $\lambda_{i}$ is the ranking of $u_{i}(\pi_{j})$. Now the problem is whether there is a monotonic $p_{j}$ which generates $\pi_{j}$. We discuss two cases.

Case 1, $\Area(\lambda_{j}) \subseteq \Cond(\lambda_{i}) \cap  \Area(\Psi_{j,\Delta^{M}}^{n-1})$ for some ranking $\lambda_{j} \in \Psi_{j,\Delta^{M}}^{n-1}$, that is, any probability distribution on $\Delta(A_{j})$ which has the ranking $\lambda_{j}$ could support $\lambda_{i}$. Since $\lambda_{j} \in \Psi_{j,\Delta^{M}}^{n-1}$, there is an action distribution $\pi_{i}$ of $i$ whose ranking survives until step $n-2$ and $\lambda_{j}$ is the ranking of $u_{j}(\pi_{i})$. Since every distribution with ranking $\lambda_{j}$ supports $\lambda_{i}$, any i.i.d. $p_{j}$ (and the belief on actions which is pushed forward from it) supports $\lambda_{i}$ because, given $j$'s belief is $\pi_{i}$, the measure of $(E_{j,a_{j}}(\pi_{i}))_{a_{j} \in A_{j}}$ by $p_{j}$ has the ranking $\lambda_{j}$ and supports $\lambda_{i}$.

Case 2. The condition in Case 1 is not satisfied, that is, for each $\lambda_{j} \in \Psi_{j,\Delta^{M}}^{n-1}$, $\Area(\lambda_{j}) \nsubseteq \Cond(\lambda_{i}) \cap  \Area(\Psi_{j,\Delta^{M}}^{n-1})$. Let $\lambda_{j} \in \Psi_{j,\Delta^{M}}^{n-1}$ such that $\Area(\lambda_{j}) \cap  \Area(\Psi_{j,\Delta^{M}}^{n-1}) \neq \emptyset$. By Algorithm~\ref{alg:cap}, one can see that $\Area(\lambda_{j}) \cap  \Area(\Psi_{j,\Delta^{M}}^{n})$ is a polyhedron. Again, as in Case 1, since $\lambda_{j} \in \Psi_{j,\Delta^{M}}^{n-1}$, there is an action distribution $\pi_{i}$ of $i$ whose ranking survives until step $n-2$ and $\lambda_{j}$ is the ranking of $u_{j}(\pi_{i})$. Let $p_{i}$ be i.i.d. with each $p_{i, a_{i}}=q_{i}$; further, we can define $q_{i}$ by interpolation or by constructing an atom at inessential  subset(s) such that $p_{i} = q_{i}^{|A_{i}|}$ satisfies the restriction by the polyhedron;\footnote{For example, when the game is $2\times 2$, an extreme-value distribution works for $q_{i}$, because, for example, given $a > b$, no polyhedron which has nonempty interaction with $\Area(a>b)$ would prescribe $\pi_{b} > \frac{1}{2}$, and consequently an extreme-value distribution with parameter bigger than $0$ works. For larger games, extreme-value distributions are not enough, for example, in Example~\ref{exame:count}, step 3, $a\geq_{1} b$ cannot survive if we only allow extreme-value distributions, because, to support $a\geq_{1} b$ we need a distribution $\pi_{2} \in \Delta(A_{2}$ such that $\pi_{2}(f) > \frac{1}{3}$, while for $u_{2}(e) > u_{2}(f) > u_{2}(d)$ (the only ranking of $2$ surviving up to step 2 which is compatible with $\pi_{2}(f) > \frac{1}{3}$), every extreme-value distribution assigns a probability to $f$ equals to or smaller than $\frac{1}{3}$. In this case, we need some interpolation. For example, define $q_{2}$ with support on $[-6, 6]$ with an atom on $(3,6)$, or define $q_{2}$ by manipulating the uniform distribution on the interval ((or, more generally, let $q_{2}$ have support $[-\alpha K, \alpha K]$, where $K = \max_{j \in I, a,a^{\prime} \in A}|u_{j}(a) - u_{j}(a^{\prime})|$ and $\alpha >1$ and either have some atom on $(K, \alpha K)$ or manipulate the uniform distribution).} also, $p_{i} \in M_{i}$ since it is the product of i.i.d.
\end{proof}

\begin{proof}[Proof of Theorem~\ref{state22}]
The statement can be proved by induction. It is clear that for each $i \in I$ and $n \in \mathbb{N}$, a distribution in Proj$_{i}RDR(G)$ survives IER at step $n$ because the distribution in $\Delta_{j}(A_{j})$ survives at step $n-1$ (and vice versa). By Proposition~\ref{suppo}, it follows that Proj$_{i}RDR(G) \subseteq \Area(\Psi^{*}_{i, \Delta^{M}}(G))$ and consequently $RDR(G) \subseteq \Area(\Psi^{*}_{1, \Delta^{M}}(G)) \times \Area(\Psi^{*}_{2, \Delta^{M}}(G))$.
\end{proof}
\section{Generalization of Theorem~\ref{Thmai2}}\label{appen2}

We now formalize the idea in 2-person games. Consider a 2-person game $G=\langle A_{1},A_{2},u_{1},u_{2}\rangle$. Still, we call an action $a_{ik} \in A_{i}$ \emph{non-serial} iff $\phi(a_{ik})= \emptyset$; it is called \emph{eventually non-serial} iff for some $a_{j} \in \Phi(a_{i})$, $\Phi(a_{j} )= \emptyset$. We have the following result.

\begin{lemma}
If for some $i \in \{1,2\}, a_{i} \in A_{i}$ is non-serial, then each action in $A_{i}$ is non-serial, and consequently every action in the game is eventually non-serial.
\end{lemma}
\begin{proof}
If $a_{i}$ is non-serial, then for each $a_{i}^{\prime} \in A_{i}$ and each $a_{j} \in A_{j}$, $u_{i}(a_{i}^{\prime},a_{j})-u_{i}(a_{i},a_{j}) = \overline{H}_{i}(a_{i},a_{i}^{\prime})$, which implies that for each $a_{i}, a_{i}^{\prime} \in A_{i}$, $u_{i}(a_{i}^{\prime},a_{j})-u_{i}(a_{i},a_{j})$ is constant on $ A_{j}$. Hence $\phi_{i}(a_{i}) = \emptyset$ for each $a_{i} \in A_{i}$, and consequently every action in the game is eventually non-serial.
\end{proof}

We call a game \emph{serial} iff no action is non-serial.

For each $a_{i}, a_{j} \in A_{1} \cup A_{2}$, $a_{j}$ is \emph{\textbf{reachable from}} $a_{i}$, denoted by $a_{i} \rhd a_{j}$, iff there are $a_{i_{0},k_{0}},a_{i_{1},k_{1}},...,a_{i_{N},k_{N}} \in A_{1} \cup A_{2} (N \geq 0)$ such that $a_{i} = a_{i_{0},k_{0}}\Rightarrow a_{i_{1},k_{1}}\Rightarrow...\Rightarrow a_{i_{N},k_{N}} =a_{j}$. Note that $a_{i}$ is reachable from itself (that is, when $N=0$). Note that, first, $a_{i}$ is influenced by every action reachable from it, and second, for every action reachable from $a_{i}$, every action which can reach it also matters since, if we want to fix the probability of using $a_{i}$ as $q_{i,a_{i}}^{*}$, all those actions $b_{k}$ should also be used with $q_{k, b_{k}}^{*}$ and cannot be manipulated freely. We use $R(a_{i})$ to denote the set of all those actions, namely $R(a_{i}):=\{b_{k} \in A_{1} \cup A_{2}: a_{i} \rhd b_{k} \text{ or } b_{k} \rhd c_{s} \text{ for some }  c_{s} \text{ with }a_{i} \rhd c_{s}\}$.

\begin{proposition}\label{propsa}
Let $a_{i} \in A_{i}$. If one of the following condition is satisfied, then $q_{i,a_{i}}^{*} = \underline{\pi}_{i,a_{i}}$:

(\textbf{C1}) $a_{i}$ is eventually non-serial, or

(\textbf{C2'}) $R(a_{i}) \neq A_{1} \cup A_{2}$.
\end{proposition}

\begin{proof}
The eventually non-serial case is the same as in the proof of Theorem~\ref{Thmai2}. Now suppose that $a_{i}$ is not eventually non-serial and  $R(a_{i}) \neq A_{1} \cup A_{2}$. We define the following symbols: for each $j \in \{1,2\}$,
\begin{displaymath}
A_{j}^{o} := A_{j} \cap R(a_{i}), {\ \ } A_{j}^{\prime} := A_{j} \setminus R(a_{i})
\end{displaymath}
It is clear that for each $k \in I$, $A_{k}^{o}$ and $A_{k}^{\prime}$ form a partition of $A_{j}$. We have the following lemma.
\begin{lemma}\label{pius}
When $a_{ik}$ is not eventually non-serial, $R(a_{i}) \neq A_{1} \cup A_{2}$ implies that $A_{j}^{\prime} \neq \emptyset$ for each $j = 1,2$. 
\end{lemma}

To see this, suppose that $A_{j}^{\prime} = \emptyset$ for some $j \in \{1,2\}$. Then $A_{-j}^{\prime} \neq \emptyset$ since $A_{1}^{\prime} \cup A_{2}^{\prime} = (A_{1} \cup A_{2})\setminus R(a_{i})$. Yet since no $a_{-j} \in A_{-j}^{\prime}$ is non-serial, there should be some $b_{j} \in A_{j}$ marginal to some $a_{-j} \in A_{-j}^{\prime}$ which is not in $R(a_{i})$, otherwise by definition $A_{-j}^{\prime} = \emptyset$. Yet since $A_{j}^{\prime} = \emptyset$, $b_{j} \in R(a_{i})$, and consequently $a_{-j} \in R(a_{i})$, which implies that $A_{-j}^{\prime} = \emptyset$, a contradiction. Therefore, $A_{j}^{\prime} \neq \emptyset$ for both $j = 1,2$.

Now we return to the proof of Proposition~\ref{propsa}. To show $q_{i,a_{i}}^{*} = \underline{\pi}_{i,a_{i}}$, we show that Statement \ref{st1} holds here, that is, there is some QRE where $a_{i}$ is used by probability $q_{i,a_{i}}^{*}$.Consider $q=(q_{1},q_{2})\in \Delta(A_{1}) \times \Delta(A_{2})$ defined as follows:

(a) For each $a_{j} \in A_{j}^{o}$ for each $j \in \{1,2\}$, let $q_{j}(a_{j})=q_{j,a_{j}}^{*}$.

(b) Since we have shown above that $A_{j}^{\prime} \neq \emptyset$ for both $j = 1,2$, we can define 
\begin{displaymath}
B_{j} =\{r_{j}\in [0,1]^{A_{j}^{\prime}}: \sum_{a_{j} \in A_{j}^{\prime}} r_{j}(a_{j}) = 1-\sum_{t:a_{j} \in A_{j}^{o}}q_{j,a_{j}}^{*} \text{ and }r_{j}(a_{j}) \geq q_{j,a_{j}}^{*} \text{ for each }s \text{ with }a_{js} \in A_{j}^{\prime}\}
\end{displaymath}
Each $B_{j}$ is well defined. It is clear that each $B_{j}$ is compact and convex, and so is $B :=B_{1} \times B_{2}$. Now consider $g: B \rightarrow B$ such that for each $j \in \{1,2\}$, $r \in B$, and  $a_{j} \in A_{j}^{\prime}$, 
\begin{displaymath}
g_{j,a_{j}}(r)=p_{j}[E_{j}^{s}((q_{-j,a_{-j}}^{*})_{a_{-j} \in A_{-j}^{o}}, (r_{-j,a_{-j}^{\prime}})_{a_{-j}^{\prime} \in A_{-j}^{\prime}})],
\end{displaymath}
where $E_{j}^{s}$ is a measurable function from $[0,1]^{A_{-j}}$ to $2^{\Theta_{j}}$ such that for each $\gamma \in [0,1]^{A_{-j}}$, 
\begin{displaymath}
E_{j}^{a_{i}}(\gamma) =\cap_{a_{i}^{\prime} \neq a_{i}}\{\theta_{i}\in \Theta_{i}: \theta_{i,a_{i}} - \theta_{i,a_{i}^{\prime}}  \geq \sum_{a_{j} \in A_{j}} \gamma(a_{j})[u_{i}(a_{i}^{\prime},a_{j}) -u_{i}(a_{i},a_{j})]\}
\end{displaymath}
By our assumptions about $p_{j}$, $j = 1,2$, $g$ is continuous. It follows from Brouwer's  fixed point theorem that $g$ has a fixed point $r^{*}$. Then for each $a_{j} \in A_{j}^{\prime}$, we let $q_{j}(a_{j}) = r^{*}_{j,a_{j}}$. It can be seen that $q$ is a QRE in which $a_{i}$ is used by probability $q_{i,a_{i}}^{*}$.
\end{proof}

Now we discuss the contingency. The following result shows that (C2') implies (C2).

\begin{lemma}\label{strin}
Suppose that $a_{i} \in A_{i}$ is not eventually non-serial. Then if it satisfies condition (C2') in Proposition~\ref{propsa}, it satisfies (C2).
\end{lemma}

\begin{proof}
First, since $a_{i}$ is serial, $|\phi_{i}(a_{i})|\geq 1$. Suppose that $ |\phi_{i}(a_{i})|> 1$ and let $b_{j},b_{j}^{\prime} \in \phi_{i}(a_{i})$ with $b_{j}\neq b_{j}^{\prime}$. Then Lemma \ref{parti} implies that $\phi_{j}(b_{j}) \cup \phi_{j}(b_{j}^{\prime} )= A_{i}$, and consequently $A_{i} \subseteq R_{i}(a_{i})$. By Lemma \ref{pius}, it follows that $R(a_{i}) = A_{1} \cup A_{2}$, a contradiction. Hence, $|\phi_{i}(a_{i})|=1$.

Second, let $\phi_{i}(a_{i}) = \{b_{j}\}$. Suppose that there is some $a_{i}^{\prime} \in \phi_{j}(b_{j})$ with $a_{i}^{\prime} \neq a_{i}$. By Lemma \ref{parti}, it follows that $\phi_{i}(a_{i}) \cup \phi_{i}(a_{i}^{\prime}) = A_{j}$, and consequently $A_{j} \subseteq R(a_{i})$. By Lemma \ref{pius} it follows that $R(a_{i})= A_{1} \cup A_{2}$, a contradiction. Hence, $\phi_{j}(\phi_{i}(a_{i})) = \{a_{i}\}$.
\end{proof}

Lemma \ref{strin} shows that (C2') can be quite strict in general 2-person games. We have the following result, which, combining with Lemma \ref{strin}, implies that generic games never satisfy the condition. 

\begin{proposition}\label{gena}
Consider a 2-person serial game with $|A_{i}|\geq 2$ for each $i = 1,2$ and at least one player has more than two actions. Then no action satisfies (C2).
\end{proposition}

\begin{proof}
Without loss of generality, suppose that player 1 has more than two actions, and $a_{1}\in A_{1}$ satisfies (C2). By Lemma \ref{strin}, $\phi_{1}(a_{1}) = \{a_{2}\}$ for some $a_{2} \in A_{2}$. So $\{a_{1},a_{2}\} \subseteq R(a_{1})$. Let $b_{1},c_{1} \in A_{1}\setminus \{a_{1}\}$ be distinct. By Lemma \ref{parti}, $\phi_{1}(b_{1}) \cup \phi_{1}(c_{1}) = A_{2}$. So $a_{2} \in \phi_{1}(b_{1})$ or $a_{2} \in \phi_{1}(c_{1})$. Hence $\{b_{1},c_{1}\}\cap R(a_{1})\neq \emptyset$. By Lemma  \ref{strin}, $\{b_{1},c_{1}\} \subseteq \phi_{2}(b_{2})$ for each $b_{2} \in A_{2}$ with $b_{2} \neq a_{2}$ (since $|A_{2}| \geq 2$, such $b_{2}$ exists), and consequently $b_{2} \in R(a_{1})$. Hence $A_{2} \cap R(a_{1}) = A_{2}$. By Lemma \ref{pius}, $R(a_{1}) =  A_{1} \cup A_{2}$, a contradiction.
\end{proof}

\end{document}